\documentclass[11pt]{article}

\usepackage[preprint]{acl}

\usepackage{times}
\usepackage{latexsym}

\usepackage[T1]{fontenc}

\usepackage[utf8]{inputenc}

\usepackage{microtype}

\usepackage{inconsolata}

\usepackage{graphicx}

\usepackage{bm}
\usepackage{geometry}
\usepackage{multirow}
\usepackage{booktabs}
\usepackage{algorithm}
\usepackage{algorithmicx}
\usepackage{algpseudocode}
\usepackage{amsmath}
\usepackage{comment}
\usepackage{ulem}
\usepackage{tikz}
\usepackage{pgfplots}
\pgfplotsset{compat=1.18}
\usepackage{amsthm}
\usepackage{amssymb}
\usepackage{enumitem}
\usepackage{subcaption}
\usepackage[most]{tcolorbox}

\allowdisplaybreaks[4]

\makeatletter
\def\th@acmplain{%
  \setlength{\parindent}{0pt}%
  \thm@headfont{\bfseries\scshape}
  \thm@notefont{\normalfont\scshape}
}
\makeatother

\definecolor{lightgraybg}{RGB}{235,235,235}
\newtcolorbox{graybox}{
    colback=lightgraybg,
    colframe=lightgraybg,
    fonttitle=\bfseries,
    boxsep=2pt,
    left=0pt,
    right=0pt,
    top=0pt,
    bottom=0pt,
    breakable,
    arc=0mm,
    before skip=\topsep,
    after skip=\topsep
}

\newtheorem{theorem}{Theorem}[section]

\newtheorem{corollary}[theorem]{Corollary}
\newtheorem{lemma}[theorem]{Lemma}
\newtheorem{proposition}[theorem]{Proposition}
\newtheorem{definition}[theorem]{Definition}

\newtheorem{fact}[theorem]{Fact}

\tcolorboxenvironment{theorem}{%
    colback=lightgraybg,
    colframe=lightgraybg,
    fonttitle=\bfseries,
    boxsep=2pt,
    left=0pt,
    right=0pt,
    top=0pt,
    bottom=0pt,
    breakable,
    arc=0mm,
    before skip=\topsep,
    after skip=\topsep
}

\tcolorboxenvironment{problem}{%
    colback=lightgraybg,
    colframe=lightgraybg,
    fonttitle=\bfseries,
    boxsep=2pt,
    left=0pt,
    right=0pt,
    top=0pt,
    bottom=0pt,
    breakable,
    arc=0mm,
    before skip=\topsep,
    after skip=\topsep
}

\tcolorboxenvironment{claim}{%
    colback=lightgraybg,
    colframe=lightgraybg,
    fonttitle=\bfseries,
    boxsep=2pt,
    left=0pt,
    right=0pt,
    top=0pt,
    bottom=0pt,
    breakable,
    arc=0mm,
    before skip=\topsep,
    after skip=\topsep
}

\tcolorboxenvironment{corollary}{%
    colback=lightgraybg,
    colframe=lightgraybg,
    fonttitle=\bfseries,
    boxsep=2pt,
    left=0pt,
    right=0pt,
    top=0pt,
    bottom=0pt,
    breakable,
    arc=0mm,
    before skip=\topsep,
    after skip=\topsep
}

\tcolorboxenvironment{lemma}{%
    colback=lightgraybg,
    colframe=lightgraybg,
    fonttitle=\bfseries,
    boxsep=2pt,
    left=0pt,
    right=0pt,
    top=0pt,
    bottom=0pt,
    breakable,
    arc=0mm,
    before skip=\topsep,
    after skip=\topsep
}

\tcolorboxenvironment{proposition}{%
    colback=lightgraybg,
    colframe=lightgraybg,
    fonttitle=\bfseries,
    boxsep=2pt,
    left=0pt,
    right=0pt,
    top=0pt,
    bottom=0pt,
    breakable,
    arc=0mm,
    before skip=\topsep,
    after skip=\topsep
}

\tcolorboxenvironment{definition}{%
    colback=lightgraybg,
    colframe=lightgraybg,
    fonttitle=\bfseries,
    boxsep=2pt,
    left=0pt,
    right=0pt,
    top=0pt,
    bottom=0pt,
    breakable,
    arc=0mm,
    before skip=\topsep,
    after skip=\topsep
}

\tcolorboxenvironment{remark}{%
    colback=lightgraybg,
    colframe=lightgraybg,
    fonttitle=\bfseries,
    boxsep=2pt,
    left=0pt,
    right=0pt,
    top=0pt,
    bottom=0pt,
    breakable,
    arc=0mm,
    before skip=\topsep,
    after skip=\topsep
}

\tcolorboxenvironment{fact}{%
    colback=lightgraybg,
    colframe=lightgraybg,
    fonttitle=\bfseries,
    boxsep=2pt,
    left=0pt,
    right=0pt,
    top=0pt,
    bottom=0pt,
    breakable,
    arc=0mm,
    before skip=\topsep,
    after skip=\topsep
}

\tcolorboxenvironment{question}{%
    colback=lightgraybg,
    colframe=lightgraybg,
    fonttitle=\bfseries,
    boxsep=2pt,
    left=0pt,
    right=0pt,
    top=0pt,
    bottom=0pt,
    breakable,
    arc=0mm,
    before skip=\topsep,
    after skip=\topsep
}

\newcommand{\R}{\mathbb{R}}

\newcommand{\wh}[1]{\widehat{#1}}

\newcommand{\ov}[1]{\overline{#1}}

\newcommand{\argmin}{\mathop{\arg\min}}

\newcommand{\thetab}{\boldsymbol{\theta}}

\newcommand{\hb}{\mathbf{h}}

\newcommand{\ub}{\mathbf{u}}
\newcommand{\vb}{\mathbf{v}}

\newcommand{\xb}{\mathbf{x}}

\newcommand{\zb}{\mathbf{z}}

\newcommand{\Ab}{\mathbf{A}}
\newcommand{\Bb}{\mathbf{B}}
\newcommand{\Cb}{\mathbf{C}}

\newcommand{\Ib}{\mathbf{I}}

\newcommand{\Lb}{\mathbf{L}}

\newcommand{\Pb}{\mathbf{P}}
\newcommand{\Qb}{\mathbf{Q}}
\newcommand{\Rb}{\mathbf{R}}

\newcommand{\Ub}{\mathbf{U}}
\newcommand{\Vb}{\mathbf{V}}
\newcommand{\Wb}{\mathbf{W}}
\newcommand{\Xb}{\mathbf{X}}

\newcommand{\Dcal}{\mathcal{D}}

\newcommand{\Qcal}{\mathcal{Q}}
\newcommand{\Rcal}{\mathcal{R}}

\newcommand{\diag}{\mathrm{diag}}
\newcommand{\tr}{\mathrm{tr}}

\input{__append_toc}

\title{%
\makebox[0pt][r]{%
    \raisebox{-0.9em}{%
        \includegraphics[
            height=4em,
            trim=8 8 8 8,
            clip
        ]{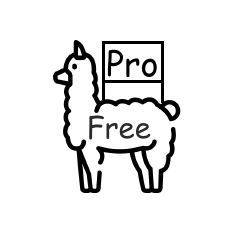}%
    }%
    \hspace{-0.75em}%
}%
Your ``Pro'' LLM Subscription May Actually Be ``Free'':\\[-0.35em]
Exposing Fingerprint Spoofing Risks in LLM Inference Services%
}

\renewcommand{\thefootnote}{\arabic{footnote}}
\author{Jiahao Zhang\textsuperscript{*},
  Xiuyu Li\textsuperscript{*}, Suhang Wang \\
  The Pennsylvania State University, USA \\
  \texttt{jiahao.zhang@psu.edu, disangan233@gmail.com, szw494@psu.edu} 
}

\begin{document}
\maketitle

\begingroup
\renewcommand{\thefootnote}{\fnsymbol{footnote}}
\footnotetext[1]{Equal contribution.}
\endgroup

\begin{abstract}
As Large Language Model (LLM) APIs become ubiquitous, users increasingly rely on black-box fingerprinting to verify that providers are serving the advertised premium models. However, these methods may overlook adversarial providers who manipulate model weights to cheat the fingerprint process. We introduce a novel threat termed \textit{fingerprint spoofing}, where a malicious provider stealthily serves a weaker model that has been parameter-efficiently fine-tuned to mimic a stronger model, thereby evading user-side fingerprinting.
We first formally prove that user-side resource constraints (i.e., finite query budgets and weak fingerprinting classifiers) make current fingerprinting vulnerable to fingerprint spoofing. Guided by this theoretical analysis, we propose \textbf{GhostPrint}, a cost-effective attack framework leveraging surrogate modeling, reward-ranked fine-tuning, and knowledge distillation. Extensive evaluations in both static and continual fingerprinting settings demonstrate that GhostPrint allows weak models to consistently bypass representative fingerprint methods while maintaining utility at a low fine-tuning cost, exposing a critical vulnerability in current LLM fingerprinting pipelines.
\end{abstract}

\section{Introduction}

LLMs have shown great ability on various tasks. A growing number of third-party providers offer API-based inference services for LLMs, such as OpenRouter\footnote{\url{https://openrouter.ai/}} and other Model-as-a-Service providers~\cite{liang2023holistic,gao2025model,chen2024frugalgpt}. These platforms serve a dual role: providing foundational architecture for downstream applications (e.g., coding agents~\cite{yao2023react,shinn2023reflexion,ning2026code}, retrieval-augmented generation~\cite{fan2025towards,liu2025exposing,yang2026query,nguyen2026urag}, personalized chat-bots~\cite{jiang2025know,jiang2025personamem,huang2026towards}, multi-modal reasoning~\cite{zhang2025sua,wu2026image,fang2026agent}) and acting as essential infrastructure for academic evaluation~\cite{chiang2024chatbot,qin2024toolllm}.

Given the foundational role of these API services, ensuring the integrity of the served models has become a pressing necessity~\cite{cai2025you,velasco2025your,zhang2026shadowapis,zhang2026resource}. From a user's perspective, a practical approach to audit whether LLM APIs are serving the advertised model relies on LLM fingerprinting (i.e., model auditing) methods~\cite{gao2025model,gubri2024trap,sun2025idiosyncrasies,pasquini2025llmmap}. The key aim of these methods is to probe a black-box API with a small number of queries and efficiently analyze the responses to verify the model's identity. This relies on the common assumption that distinct response styles to specific prompts are sufficient to classify models~\cite{pasquini2025llmmap,sun2025idiosyncrasies}, making fingerprinting a trustworthy tool.

However, fingerprinting inherently assumes that the models remain static and are not adversarially manipulated by API providers, which might not hold in practice. For instance, an inference provider serving open-source models might adversarially fine-tune a weak model to closely mimic the fingerprint of a stronger one, bypassing user audits, for profit gain (see Figure~\ref{fig:setting}). Similarly, companies hosting proprietary models (e.g., OpenAI, Google) could stealthily serve fine-tuned, weaker variants to reduce operational costs while evading detection. If these fine-tuning processes are computationally efficient and financially profitable, they pose a critical risk to API users. 
Therefore, we challenge the adversarial reliability of existing LLM fingerprinting methods and study a \textit{novel and fundamental research question}: 
\begin{graybox}
    \textit{Can a malicious model provider parameter-efficiently fine-tune a weak model so that the weak model behaves like a stronger model to bypass user-side fingerprinting tests?}
\end{graybox}

To answer this question, we formalize a threat model named \textbf{fingerprint spoofing}, where API service providers actively seek to bypass user-side model auditing (Figure~\ref{fig:main_fig}(a)). In this setting, the provider advertises a strong LLM to charge a premium rate, but actually serves a parameter-efficient~\cite{hu2022lora,dettmers2023qlora,liu2024dora} fine-tuned weak LLM to evade fingerprint checks. Successfully executing such an attack is highly non-trivial and presents severe technical challenges. Fundamentally, based on LLM scaling laws~\cite{kaplan2020scaling,hoffmann2022empirical,sardana2024beyond} and our lower-bound analysis (Proposition \ref{prop:spoofing_lb_informal}), a small model inherently lacks the representational capacity to perfectly mimic the behavior of a much larger model. Consequently, the attacker faces a severe optimization trade-off: aggressively fine-tuning a weak model to mimic a strong model's fingerprint behavior may lead to catastrophic forgetting~\cite{li2024revisiting,luo2025empirical}, destroying its ability to answer normal user queries. Furthermore, mimicking a large model with a small model must be achieved under resource constraints, since full-parameter retraining is often expensive and would make the adversary's attack financially infeasible. 

\begin{figure}[!t]
    \centering
    \includegraphics[width=0.95\linewidth]{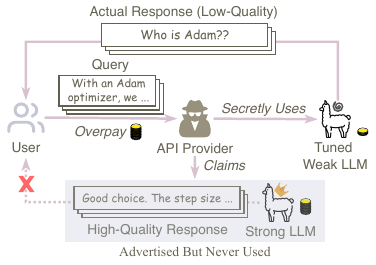}
    \vskip -1em
    \caption{\textbf{The economic motivation of LLM API spoofing}. To maximize profit, a malicious provider secretly serves a tuned weak LLM instead of the advertised strong model. If the provider successfully spoofs the user's fingerprint audit, the user is deceived into overpaying for lower-quality responses.}  
    \vskip -1em
    \label{fig:setting}
\end{figure}

Formalizing these technical challenges, our goal is to study whether a malicious provider can successfully resolve this trade-off to simultaneously achieve three conflicting objectives: (i) \textbf{fingerprint spoofing}, where the adapted model becomes behaviorally indistinguishable from the claimed strong model under auditing; (ii) \textbf{utility preservation}, where the adapted weak model maintains acceptable performance on normal downstream tasks; and (iii) \textbf{economic gain}, where the attack only relies on inexpensive parameter-efficient fine-tuning (PEFT) mechanisms.

To understand why spoofing remains feasible under the three goals, we theoretically analyze the vulnerabilities of current fingerprint pipelines. Constrained by user-side resources, users often rely on limited query sets~\cite{gubri2024trap,pasquini2025llmmap} and weak verification mechanisms~\cite{sun2025idiosyncrasies,gao2025model}. Consequently, an attacker will not need to emulate the strong model universally, and a localized, parameter-efficient adaptation suffices to fool the user-side audit. We formalize these limitations in Theorem \ref{thm:simple_dist_spoofing} (finite query budgets) and Theorem \ref{thm:classifier_spoofing} (weak local classifiers). Guided by these theoretical weaknesses, we propose a novel attack framework, namely {\bf GhostPrint}, to overcome the utility-spoofing trade-off. Specifically, the attacker trains a surrogate fingerprint model and employs reward-ranked fine-tuning to identify queries where the weak model can naturally mimic the larger model. We then leverage LoRA-based supervised fine-tuning (SFT) alongside knowledge distillation (KD) to align the weak model with the strong model, specifically on the targeted fingerprint queries, effectively bypassing the audit while preserving general utility. Furthermore, recognizing that user-side fingerprints evolve over time, we extend our framework to a continual spoofing setting, exploring whether an attacker can dynamically adapt to new fingerprint queries and auditing methods. We empirically validate this attack framework in both static and continual settings, exposing the practical risks posed by fingerprint spoofing.

Our \textbf{core contributions} are: (i) \textit{Theoretical Vulnerability Analysis}: We formally demonstrate when and why fingerprint spoofing is possible, proving that user-side resource constraints (i.e., weak query sets and weak surrogate models) make existing auditing methods fundamentally breakable (Theorem \ref{thm:simple_dist_spoofing} and \ref{thm:classifier_spoofing}); (ii) \textit{Fingerprint Spoofing Framework}: We propose a novel, practical attack that can 
successfully bypass static fingerprinting while maintaining task utility; and (iii) \textit{Continual Spoofing Evaluation}: We extend our attack to dynamic environments, demonstrating that malicious providers can continually adapt to evolving user-side fingerprinting methods, highlighting a persistent threat to LLM API integrity.
\section{Related Works}
\label{sec:related}

\noindent {\bf LLM Fingerprinting}.
Pay-per-use APIs create incentives for providers to silently substitute weaker models~\cite{cai2025you,zhang2026shadowapis}. Fingerprinting addresses this by verifying model identity. Existing methods fall into three categories: white-box non-intrusive approaches that inspect internal signals~\cite{zheng2022dnn,wu2025tensorguard,zhang2025reef,zhang2026attndiff}, black-box non-intrusive approaches that analyze API outputs~\cite{gao2025model,sun2025idiosyncrasies,pasquini2025llmmap,gubri2024trap}, and intrusive methods that embed verifiable signatures during training~\cite{gu2022watermarking,li2023plmmark,xu2024instructional,yamabe2025mergeprint}. Further discussion is provided in Appendix~\ref{sec:app:more_related}.

\noindent {\bf Attacks Against LLM Fingerprinting}.
Adversarial robustness of fingerprinting is an emerging concern. Prior attacks use rewriting or statistical evasion but assume no fine-tuning of the weak model and rely on distinguishing query distributions~\cite{kurian2025attacks,nasery2025robust}. In contrast, we consider a provider that actively fine-tunes a weak model to spoof fingerprints, a realistic threat that undermines distribution-agnostic auditors like LLM-idio~\cite{sun2025idiosyncrasies} and Model Equality Testing (MET)~\cite{gao2025model}.
\section{Preliminaries}\label{sec:prelim}

\noindent {\bf Notations}. 
We use $\|\Wb\|_F$ and $\|\Wb\|_2$ to denote the matrix Frobenius and $\ell_2$ norms, and use $\|\xb\|_2$ to denote the vector $\ell_2$ norm.
We treat an LLM as a conditional probability distribution over token sequences. Given an LLM $f$ parameterized by $\thetab$ and a prompt $q$, a response $r$ is sampled as $r \sim f(\cdot \mid q; \thetab)$. When the parameters are fixed, we write $r \sim f(\cdot \mid q)$ for notation simplicity. 

\noindent {\bf Black-Box LLM Fingerprinting}. The goal of black-box fingerprinting is to audit an LLM API, denoted as $f_{\mathrm{api}}$, to determine whether it actually serves a specific claimed target model, $f_{\mathrm{target}}$. The auditing procedure is formalized below.

\begin{definition}[Black-Box LLM Fingerprinting]\label{dfn:bb_llm_fingerprint}
    Let the user/auditor have a query distribution $\Dcal_{q}$, a query budget $n$, and a local classification mechanism $g$. The fingerprinting procedure is:
    \begin{enumerate}[label=\textbf{\arabic*.}, leftmargin=*, nosep, topsep=2pt]

        \item \textbf{Query Sampling}: The auditor samples a query set $\Qcal^n := \{q^i\}_{i=1}^n$, where $q^i \sim \Dcal_q$.

        \item \textbf{Response Collection}: The auditor queries the API and collects the responses $\Rcal^n := \{r^i\}_{i=1}^n$, where $r^i \sim f_{\mathrm{api}}(\cdot \mid q^i)$ and $q^i \in \mathcal{Q}^n$.

        \item \textbf{Verification}: The mechanism $g$ maps the collected responses to a binary decision $g(\Rcal^n, f_{\mathrm{target}}) \in \{0, 1\}$, returning $1$ if the API is verified as $f_{\mathrm{target}}$, and $0$ otherwise. 

    \end{enumerate}
\end{definition}

The above definition naturally encapsulates existing representative black-box fingerprinting methods, which differ mainly in their formulation of the query distribution $\Dcal_q$ and the verification mechanism $g$. For instance, LLMmap~\cite{pasquini2025llmmap} restricts $\Dcal_q$ to a small set of carefully crafted, identity-revealing probes and trains $g$ as a multi-class embedding classifier. 
MET~\cite{gao2025model} uses a general $\Dcal_q$ but formulates $g$ as a statistical two-sample kernel test (Maximum Mean Discrepancy) that computes the distance between $\Rcal^n$ and reference generations, thresholding the p-value to yield a binary decision. 

\noindent {\bf Practical Constraints of Fingerprinting}. In practice, the query budget $n$ in Definition~\ref{dfn:bb_llm_fingerprint} enforces a trade-off between auditing accuracy and financial cost, as users must pay for every API call. A larger $n$ is more accurate but result in higher cost.  Furthermore, user-side resource constraints typically restrict the verification mechanism $g$ to lightweight statistical tests~\cite{gao2025model,zhu2026auditing} or weak surrogate models~\cite{pasquini2025llmmap,sun2025idiosyncrasies} rather than state-of-the-art LLMs.

\noindent {\bf Why Black-box Fingerprinting?} 
We focus on black-box fingerprinting, where the auditor only has query access, because this mirrors the reality of commercial LLM APIs. End users cannot inspect internal model states or intervene in the provider's training pipeline. For proprietary models like GPT‑4o, the query‑response interface is the sole accessible surface, making black‑box methods the only practical tool for user‑side verification.
\section{Problem Formulation}
\label{sec:problem}


\subsection{Threat Model}\label{sec:threat_model}

We study a malicious third-party API provider (i.e., the adversary) that claims to serve a strong target model $f_{\mathrm{target}}$, but in reality deploys a weaker and cheaper model architecture $f_{\mathrm{weak}}$, initially parameterized by $\thetab_{\mathrm{weak}}$. The adversary aims to adapt these parameters into deceptive parameters $\thetab_{\mathrm{adv}}$ such that the auditor's local mechanism $g$ incorrectly verifies the adapted model's distribution $f_{\mathrm{weak}}(\cdot \mid \cdot; \thetab_{\mathrm{adv}})$ as $f_{\mathrm{target}}$.

\noindent {\bf Attacker's Capabilities.} We assume the attacker has the following capabilities:
\begin{itemize}[leftmargin=*, nosep, topsep=2pt]
    \item \textbf{Knowledge of the fingerprinting mechanism}: The attacker has full knowledge of the auditor's local classification mechanism $g$ (e.g., feature extractor, distance metric, etc.) and can build a surrogate $\ov{g}$ that mimics the auditor's $g$. 
    \item \textbf{Knowledge of the query distribution}: The attacker knows the distribution $\Dcal_q$ from which the auditor draws queries. This allows the attacker to sample a surrogate query set $\ov{\Qcal}^m := \{\ov{q}_i\}_{i=1}^m$, where $\ov{q}_i \sim \Dcal_q$. Importantly, \textit{the attacker does not know the exact finite query set $\Qcal^n$ instantiated by the auditor during the verification}. 
    
    \item \textbf{White-box manipulation of the weak model}: The attacker has full access to the deployed architecture $f_{\mathrm{weak}}$ and can tune its parameters via parameter-efficient fine-tuning (PEFT).
\end{itemize}

\noindent {\bf Attacker's Goal.} The attacker aims to simultaneously achieve: (i)  \textit{Fingerprint spoofing}: Under the fingerprinting mechanism (Definition~\ref{dfn:bb_llm_fingerprint}), the auditor's local mechanism $g$ incorrectly verifies $f_{\mathrm{weak}}(\cdot \mid \cdot; \thetab_{\mathrm{adv}})$ as $f_{\mathrm{target}}$; (ii) \textit{Utility preservation}: After adaptation, the served model $f_{\mathrm{weak}}(\cdot \mid \cdot; \thetab_{\mathrm{adv}})$ still maintains acceptable normal-task performance; and (iii) \textit{Economic gain}: The process to obtain $\thetab_{\mathrm{adv}}$ is cost-effective. 

\noindent {\bf Practicality of Threat Model.} 
Assuming the adversary knows both the query distribution $\Dcal_q$ and the classifier $g$ aligns with adaptive-attacker evaluations rooted in Kerckhoffs's principle. For an audit to be credible and independently verifiable, its methodology, classification algorithm $g$, and query set $\Dcal_q$ must be transparent and publicly disclosed. Relying on secrecy of the auditing pipeline constitutes security by obscurity, so a fully informed attacker model is necessary.

\subsection{Fingerprint Spoofing}\label{sec:spoofing_problem}

Based on the threat model described in Section~\ref{sec:threat_model}, we now formalize our attacker goals into a general problem definition. 

\begin{definition}[Fingerprint Spoofing Problem]\label{dfn:spoofing_problem}
    Given a white-box weak model architecture $f_{\mathrm{weak}}$ initially parameterized by $\thetab_{\mathrm{weak}}$, access to a target model $f_{\mathrm{target}}$, and knowledge of the auditor's query distribution $\Dcal_q$ and verification mechanism $g$, the fingerprint spoofing problem seeks to efficiently find adapted parameters $\thetab_{\mathrm{adv}}$ for the weak model to achieve three primary objectives:
    \begin{enumerate}[label=\textbf{\arabic*.}, leftmargin=*, nosep, topsep=2pt]
        \item \textbf{Fingerprint Spoofing:} The adapted model $f_{\mathrm{weak}}(\cdot \mid \cdot; \thetab_{\mathrm{adv}})$ must generate responses that fool the auditor's mechanism $g$ into identifying it as the target model over the distribution $\Dcal_q$.
        \item \textbf{Utility Preservation:} The adapted model $f_{\mathrm{weak}}(\cdot \mid \cdot; \thetab_{\mathrm{adv}})$ must maintain general model utility comparable to the initial weak model $f_{\mathrm{weak}}(\cdot \mid \cdot; \thetab_{\mathrm{weak}})$, ensuring the API remains functional for benign user requests.
        \item \textbf{Economic Gain:} The adversary's fine-tuning process to obtain $\thetab_{\mathrm{adv}}$ must be economically viable. Specifically, the adapted parameters are obtained via Parameter-Efficient Fine-Tuning (PEFT), i.e., $\thetab_{\mathrm{adv}} = \thetab_{\mathrm{weak}} + \Delta\thetab$, where $\Delta\thetab$ is a small low-rank update. 
    \end{enumerate}
\end{definition}

\section{Proposed Method}
\label{sec:proposed}

\begin{figure*}[t]
  \centering
  \includegraphics[width=0.99\linewidth]{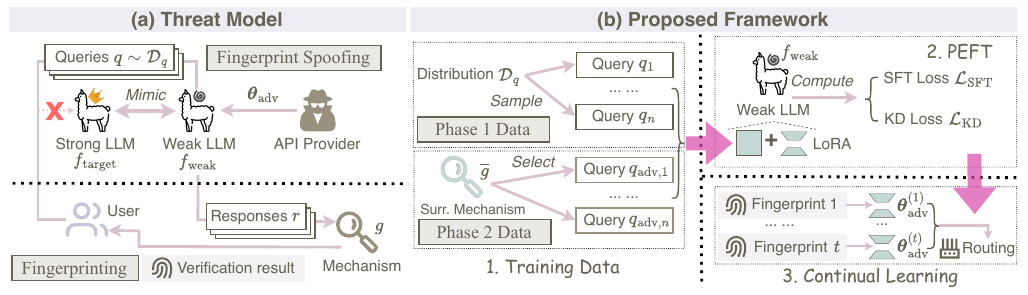}
  \vspace{-0.1in}
  \caption{Overview of GhostPrint. (a) Threat model. An adversary leverages a weak LLM to mimic a strong target model and evade fingerprint-based verification when interacting with an API provider. (b) Proposed framework. GhostPrint constructs training data from both target-model and surrogate queries, then fine-tunes a weak LLM via SFT and KD, and enables continual learning to progressively adapt the adversarial fingerprint over time.}
  \vspace{-0.1in}
  \label{fig:main_fig}
\end{figure*}

In this section, we first present a theoretical study on fundamental limits of LLM fingerprinting, and then present our proposed attack framework. An illustration of our framework is in Figure~\ref{fig:main_fig}. 

\subsection{When and Why Could Fingerprint Spoofing Happen?} \label{sec:theory}

As shown in Section~\ref{sec:threat_model}, fingerprint spoofing requires balancing three conflicting goals: fingerprint spoofing, utility preservation, and computational efficiency. To design a successful attack, it is critical to first establish a theoretical understanding of when and why fingerprint spoofing is feasible.

\noindent {\bf Impossibility of Universal Spoofing}. 
Fundamentally, small models lack the representational capacity of large models due to established scaling laws~\cite{kaplan2020scaling,hoffmann2022empirical,sardana2024beyond}. To formally capture this, we analyze a single high-rank weight transformation $\Wb$ (e.g., an FFN or attention projection matrix)~\cite{vaswani2017attention} in a target large model $f_{\mathrm{target}}$, and its corresponding lower-rank counterpart $\Wb'$ in the weak model $f_{\mathrm{weak}}$\footnote{In practice, weak models may use smaller hidden dimensions ($d_1' < d_1$) with weights $\Vb \in \mathbb{R}^{d_1' \times d_2}$. Our setting, where $\Wb$ and $\Wb'$ share identical dimensions but differ in rank, naturally captures this reality. By letting $\Wb' = \Pb\Vb$, where $\Pb \in \mathbb{R}^{d_1 \times d_1'}$ is an up-projection, the rank of $\Wb'$ is bottlenecked by $d_1'$, ensuring it has a lower rank than $\Wb$.}. The following proposition demonstrates that if an auditor has sufficient resources, spoofing via parameter-efficient fine-tuning (PEFT) is mathematically impossible. 

\begin{proposition}[Lower Bound of Universal Spoofing]
\label{prop:spoofing_lb_informal}
Let $\Wb, \Wb' \in \mathbb{R}^{d_1 \times d_2}$ be two weight matrices, where $\Wb$ has rank $k_1$ and $\Wb'$ has rank $k_2$, with $k_1 > k_2$. Let the low-rank adaptation (LoRA) matrices be $\Ab \in \mathbb{R}^{d_1\times r}$ and $\Bb \in \mathbb{R}^{r\times d_2}$, where $r < k_1 - k_2$. Define the adapted weight matrix as $\widehat{\Wb} := \Wb' + \Ab\Bb$ and the residual matrix as $\Rb := \Wb - \Wb'$. For any input vector $\xb \in \mathbb{R}^{d_2}$, the following lower bounds hold:
\begin{enumerate}[leftmargin=*, nosep, topsep=2pt]
    \item \textbf{Spectral Norm:}
    \begin{align*}
    \inf_{\Ab, \Bb} \sup_{\|\xb\|_2 = 1} \| \Wb\xb - \widehat{\Wb}\xb \|_2 \ge \sigma_{\min}^+(\Rb), 
    \end{align*}
    \item \textbf{Frobenius Norm:}
    \begin{equation*}\small
    \inf_{\Ab, \Bb} \| \Wb - \widehat{\Wb} \|_F^2 \ge (k_1 - k_2 - r) \cdot \sigma_{\min}^+(\Rb)^2,
    \end{equation*}
\end{enumerate}
where $\sigma_{\min}^+(\Rb)$ denotes the smallest non-zero singular value of $\Rb$.
\end{proposition}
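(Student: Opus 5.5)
The plan is to reduce both bounds to the Eckart--Young--Mirsky theorem applied to the residual $\Rb = \Wb - \Wb'$. First I note that $\Wb - \widehat{\Wb} = \Rb - \Ab\Bb$, and $\Ab\Bb$ ranges over matrices of rank at most $r$. The left-hand sides are therefore best rank-$r$ approximation errors of $\Rb$: the spectral one equals $\sigma_{r+1}(\Rb)$ and the Frobenius one equals $\sum_{i>r}\sigma_i(\Rb)^2$, where $\sigma_1(\Rb)\ge\sigma_2(\Rb)\ge\cdots$ are the singular values of $\Rb$. (The supremum over unit $\xb$ in item 1 is exactly $\|\Rb-\Ab\Bb\|_2$, since the phrase ``for any input vector'' is absorbed by that supremum.)

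Second, I lower-bound $\mathrm{rank}(\Rb)$ by subadditivity of rank. Since $\Wb = \Rb + \Wb'$, we have $k_1 \le \mathrm{rank}(\Rb) + k_2$, so $s := \mathrm{rank}(\Rb) \ge k_1 - k_2 > r$. Hence $\sigma_{r+1}(\Rb)$ is a nonzero singular value, which gives $\sigma_{r+1}(\Rb) \ge \sigma_{\min}^+(\Rb) = \sigma_s(\Rb)$. This proves the spectral bound.

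Third, for the Frobenius bound, I write
\begin{align*}
\sum_{i=r+1}^{s}\sigma_i(\Rb)^2 \ge (s-r)\,\sigma_{\min}^+(\Rb)^2 \ge (k_1-k_2-r)\,\sigma_{\min}^+(\Rb)^2 .
\end{align*}
Here every term with index $r+1\le i\le s$ is at least $\sigma_s(\Rb)^2$, and the count $s-r$ of such terms is at least $k_1-k_2-r$ by the second step.

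There is no real obstacle; the only points needing care are these. First, I must invoke the Eckart--Young--Mirsky theorem in both norms, or alternatively argue directly. For the spectral case, a rank-$r$ matrix $\Ab\Bb$ has a kernel of dimension at least $d_2 - r$, which meets the $(r+1)$-dimensional top right-singular subspace of $\Rb$ nontrivially. A unit $\xb$ in that intersection gives $\|(\Rb-\Ab\Bb)\xb\|_2=\|\Rb\xb\|_2\ge\sigma_{r+1}(\Rb)$. Second, I must check that the rank bookkeeping guarantees $\sigma_{r+1}(\Rb)\neq 0$, which uses the hypothesis $r < k_1-k_2$.
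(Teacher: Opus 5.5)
Your proposal is correct and matches the paper's proof essentially step for step: you reduce both quantities to best rank-$r$ approximation errors of $\Rb$ via Eckart--Young, use subadditivity of rank to get $\mathrm{rank}(\Rb)\ge k_1-k_2>r$, and then bound the tail singular values from below by $\sigma_{\min}^+(\Rb)$, using that there are at least $k_1-k_2-r$ nonzero terms in the tail. Your optional kernel-intersection argument for the spectral case is the same dimension-counting idea the paper uses to prove its variant of the Eckart--Young theorem.
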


The rank condition $k_2 < k_1$ reflects the scale gap between weak and strong models, and the bound on $r$ captures practical PEFT constraints. Proposition~\ref{prop:spoofing_lb_informal} shows universal spoofing error is strictly positive, motivating us to exploit specific weaknesses of real auditors. In practice, users have finite query budgets and weak classifiers, introducing vulnerabilities that make successful spoofing achievable despite the theoretical lower bound.

\noindent {\bf Vulnerabilities of LLM Fingerprinting}. Despite the theoretical impossibility of universal spoofing, users cannot expend infinite resources to audit an LLM API in practice. They cannot utilize infinite queries, nor can they always deploy the strongest classifiers to verify responses. These practical constraints create critical vulnerabilities and establish theoretical upper bounds for successful spoofing. 

\textbf{First}, existing auditing methods often rely on a small, simple set of queries. For example, LLMmap~\cite{pasquini2025llmmap} uses queries like ``Which model is this?'' with specific prefixes for auditing. When the input space is sufficiently simple and structured, we can establish the following upper bound based on query distribution simplicity. 

\begin{theorem}[Upper Bound under Low-Effective-Rank Distributions]
\label{thm:simple_dist_spoofing}
Let $\xb \in \mathbb{R}^{d_2}$ be the representation of a query drawn from a zero-mean continuous distribution $\Dcal_q$, and let $\mathbf{\Sigma} := \mathbb{E}_{\xb \sim \Dcal_q}[\xb\xb^\top]$ be the covariance matrix. Let $\lambda_1 \ge \lambda_2 \ge \dots \ge \lambda_{d_2} \ge 0$ denote the eigenvalues of $\mathbf{\Sigma}$. 
Let the LoRA matrices be $\Ab \in \mathbb{R}^{d_1\times r}$ and $\Bb \in \mathbb{R}^{r\times d_2}$, and the adapted weight matrix be $\widehat{\Wb} := \Wb' + \Ab\Bb$. The optimal expected spoofing error is bounded by the following equation: 
\vspace{-0.1in}
\begin{align*}\small
    \begin{aligned}
    \inf_{\Ab, \Bb} \mathbb{E}_{\xb \sim \Dcal_q} \big[ \| \Wb\xb - \widehat{\Wb}\xb \|_2^2 \big] \le \|\Rb\|_2^2 \sum_{i=r+1}^{d_2} \lambda_i, 
    \end{aligned}
\end{align*}
where $\Rb := \Wb - \Wb'$ is the residual matrix. 
\end{theorem}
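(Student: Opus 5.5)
Since the statement is an upper bound on an infimum, I will exhibit one explicit choice of LoRA factors and bound its error. I first rewrite the objective in terms of the residual. Because $\Wb - \widehat{\Wb} = \Rb - \Ab\Bb$, the expected error is
\begin{align*}
\mathbb{E}_{\xb \sim \Dcal_q}\big[\|(\Rb - \Ab\Bb)\xb\|_2^2\big] = \tr\big((\Rb-\Ab\Bb)\mathbf{\Sigma}(\Rb-\Ab\Bb)^\top\big),
\end{align*}
where the identity uses only $\mathbb{E}[\xb\xb^\top] = \mathbf{\Sigma}$. Zero mean and continuity of $\Dcal_q$ play no further role, apart from ensuring the second moment exists.

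Next I choose the adapter so that it reproduces $\Rb$ exactly on the dominant eigenspace of $\mathbf{\Sigma}$. Let $\mathbf{\Sigma} = \Ub\,\diag(\lambda_1,\dots,\lambda_{d_2})\Ub^\top$ with orthonormal eigenvectors $\ub_1,\dots,\ub_{d_2}$. Let $\Ub_r = [\ub_1,\dots,\ub_r]$ and $\Pb_r = \Ub_r\Ub_r^\top$. I set $\Ab := \Rb\Ub_r \in \mathbb{R}^{d_1\times r}$ and $\Bb := \Ub_r^\top \in \mathbb{R}^{r\times d_2}$, so that $\Ab\Bb = \Rb\Pb_r$ has the admissible rank-$r$ shape. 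The residual error matrix is then $\Rb(\Ib - \Pb_r)$. If $r \ge d_2$, this matrix is zero, the sum on the right-hand side is empty, and the claim is trivial.

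Otherwise I compute the trace. Since $\Ib-\Pb_r$ commutes with $\mathbf{\Sigma}$ and is idempotent,
\begin{align*}
\tr\big(\Rb(\Ib-\Pb_r)\mathbf{\Sigma}(\Ib-\Pb_r)\Rb^\top\big) = \sum_{i=r+1}^{d_2} \lambda_i \|\Rb\ub_i\|_2^2 .
\end{align*}
Each term satisfies $\|\Rb\ub_i\|_2^2 \le \|\Rb\|_2^2$, because $\ub_i$ is a unit vector. This gives $\|\Rb\|_2^2\sum_{i>r}\lambda_i$, and the infimum is at most the value achieved by this particular $(\Ab,\Bb)$.

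I expect no real obstacle in this argument. The only step that needs care is the trace identity. Expanding $(\Ib-\Pb_r)\mathbf{\Sigma}(\Ib-\Pb_r) = \sum_{i>r}\lambda_i \ub_i\ub_i^\top$ and using cyclicity of the trace gives it cleanly. This step also shows that the eigenbasis projection, rather than an SVD of $\Rb$ alone, is the right choice of adapter. As an optional remark, the optimal choice would be a truncated SVD of $\Rb\mathbf{\Sigma}^{1/2}$ (Eckart--Young), which could only tighten the bound.
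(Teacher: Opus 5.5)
Your proposal is correct and follows the paper's proof: the same adapter $\Ab\Bb = \Rb\Ub_r\Ub_r^\top$ built from the top-$r$ eigenvectors of $\mathbf{\Sigma}$, the same trace collapse to $\sum_{i>r}\lambda_i\|\Rb\ub_i\|_2^2$, and the same bound $\|\Rb\ub_i\|_2 \le \|\Rb\|_2$. You also add three small valid refinements: the explicit factorization $\Ab = \Rb\Ub_r$, $\Bb = \Ub_r^\top$, the observation that zero mean and continuity are not needed, and the remark that a truncated SVD of $\Rb\mathbf{\Sigma}^{1/2}$ gives the exact optimum.
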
 

The theorem shows that when the query distribution $\Dcal_q$ is simple (small effective rank), the expected spoofing error can be substantially minimized. This matches LLMmap, where queries concentrate in specific semantic directions, revealing the first vulnerability of current auditing methods.

\textbf{Second}, due to training costs, auditors often rely on weak classifiers. For example,  LLMmap~\cite{pasquini2025llmmap} and LLM‑idio~\cite{sun2025idiosyncrasies} use small LLM‑based classifiers. To analyze this, we consider a projection weight matrix $\Cb$ from $g$'s transformer and test whether outputs $\Wb\xb$ and $(\Wb' + \Ab\Bb)\xb$ become indistinguishable under $\Cb$.

\begin{theorem}[Tight Bound under Weak Classifiers]
\label{thm:classifier_spoofing}
Let $\Cb \in \mathbb{R}^{d_1 \times d_1}$ be a weight matrix from the auditor's verification model, and let $\Wb, \Wb' \in \mathbb{R}^{d_1 \times d_2}$ be the weight matrices of the strong target model and the weak surrogate model, respectively. 
Let the LoRA matrices be $\Ab \in \mathbb{R}^{d_1\times r}$ and $\Bb \in \mathbb{R}^{r\times d_2}$, and the adapted weight matrix be $\wh{\Wb} := \Wb' + \Ab\Bb$. Under the audit, the worst-case spoofing error is exactly: 
\begin{align*}\small
    \begin{aligned}
    \inf_{\Ab, \Bb} \sup_{\|\xb\|_2 = 1} \| \Cb^\top (\Wb\xb - \wh{\Wb}\xb) \|_2 = \sigma_{r+1}(\Cb^\top \Rb),
    \end{aligned}
\end{align*}
where $\Rb = \Wb - \Wb'$ is the residual matrix, and $\sigma_{r+1}(\cdot)$ is the $(r+1)$-th largest singular value. 
\end{theorem}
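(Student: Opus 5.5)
The plan is to reduce the statement to the Eckart--Young--Mirsky theorem in the spectral norm. The only extra work is to check that the LoRA parametrization, seen through $\Cb^\top$, can still reach the optimal truncation.

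\textbf{Step 1 (reformulation).} Since $\Wb\xb - \wh{\Wb}\xb = (\Rb - \Ab\Bb)\xb$, the inner supremum over unit vectors $\xb$ is, by definition, the operator norm. So the quantity to compute is
\begin{align*}
\inf_{\Ab,\Bb} \big\| \Cb^\top \Rb - \Cb^\top \Ab \Bb \big\|_2 .
\end{align*}
Write $\Xb := \Cb^\top\Rb \in \R^{d_1\times d_2}$ and $\Lb := \Cb^\top\Ab\Bb$. Note that $\mathrm{rank}(\Lb) \le r$, and that $\mathrm{col}(\Lb) \subseteq \mathrm{col}(\Cb^\top)$.

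\textbf{Step 2 (lower bound).} For every choice of $\Ab,\Bb$, the matrix $\Lb$ has rank at most $r$. Eckart--Young--Mirsky in the spectral norm then gives $\|\Xb - \Lb\|_2 \ge \sigma_{r+1}(\Xb)$. If one wants this self-contained, it follows from Weyl's inequality $\sigma_{r+1}(\Xb) \le \sigma_{r+1}(\Lb) + \sigma_1(\Xb-\Lb)$ together with $\sigma_{r+1}(\Lb)=0$. Taking the infimum yields $\ge \sigma_{r+1}(\Cb^\top\Rb)$. This is the same mechanism used for the spectral part of Proposition~\ref{prop:spoofing_lb_informal}.

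\textbf{Step 3 (achievability).} This is the step that needs care. The constraint set $\{\Cb^\top\Ab\Bb\}$ is smaller than the set of all rank-$\le r$ matrices, so it is not immediate that it contains the truncated SVD of $\Xb$. Let $\Xb = \sum_i \sigma_i \ub_i \vb_i^\top$ be an SVD, and set $\Xb_r := \sum_{i\le r}\sigma_i\ub_i\vb_i^\top$, so that $\|\Xb-\Xb_r\|_2 = \sigma_{r+1}(\Xb)$.

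For each $i \le r$ with $\sigma_i>0$, the vector $\ub_i$ lies in $\mathrm{col}(\Xb) = \mathrm{col}(\Cb^\top\Rb) \subseteq \mathrm{col}(\Cb^\top)$. Hence there exists $\mathbf{a}_i\in\R^{d_1}$ with $\Cb^\top\mathbf{a}_i = \sigma_i\ub_i$. For indices with $\sigma_i=0$, take $\mathbf{a}_i = 0$.

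Now let $\Ab$ have columns $\mathbf{a}_1,\dots,\mathbf{a}_r$, and let $\Bb$ have rows $\vb_1^\top,\dots,\vb_r^\top$. Then $\Cb^\top\Ab\Bb = \Xb_r$, so the infimum is attained and equals $\sigma_{r+1}(\Cb^\top\Rb)$.

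If $r \ge \mathrm{rank}(\Xb)$, the same construction gives zero error, which is consistent with $\sigma_{r+1}=0$. If $r \ge \min(d_1,d_2)$, we adopt the convention $\sigma_{r+1}:=0$.

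\textbf{Main obstacle.} The key point is the range-inclusion argument in Step 3, namely that $\mathrm{col}(\Cb^\top\Rb)\subseteq\mathrm{col}(\Cb^\top)$. This is what makes the bound tight, rather than merely a lower bound as in Proposition~\ref{prop:spoofing_lb_informal}. It also explains the interpretation: when $\Cb$ is low-rank (a weak classifier), $\Cb^\top\Rb$ has few nonzero singular values, and a small LoRA rank $r$ suffices to drive the worst-case audited error to zero.
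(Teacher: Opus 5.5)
Your proposal is correct and is essentially the paper's own argument. Both reduce to $\inf_{\Ab,\Bb}\|\Cb^\top\Rb-\Cb^\top\Ab\Bb\|_2$, get the lower bound from an Eckart--Young-type fact (the paper reproves it with a null-space dimension count, you cite Weyl), and get achievability from $\mathrm{col}(\Cb^\top\Rb)\subseteq\mathrm{col}(\Cb^\top)$ (the paper via $(\Cb^\top)^+(\Cb^\top\Rb)_r$, you column by column, where $\mathbf{a}_i=\Rb\vb_i$ already works and gives $\Ab\Bb=\Rb\sum_{i\le r}\vb_i\vb_i^\top$).
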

If the auditor utilizes a simple classifier with rank $c \le r$, then $\sigma_{r+1}(\mathbf{C}^\top \Rb) = 0$. This shows that when the user-side fingerprint model $g$ is sufficiently weak, the adversary can ideally use PEFT to perfectly fool the fingerprint with zero error.

\noindent {\bf Takeaway.}
Our analysis reveals a critical misalignment: perfect spoofing is globally impossible (Proposition~\ref{prop:spoofing_lb_informal}), yet it is locally trivial under realistic auditing constraints (Theorems~\ref{thm:simple_dist_spoofing} and~\ref{thm:classifier_spoofing}). 

\subsection{The Proposed GhostPrint Attack}\label{sec:ghostprint}

As shown in Section~\ref{sec:theory}, practical LLM fingerprinting pipelines suffer from two fundamental vulnerabilities: (i) low-complexity query distributions $\Dcal_q$, and (ii) the reliance on computationally weak verification models $g$. A potent fingerprint spoofing attack must actively exploit both vulnerabilities simultaneously. To this end, we propose \textbf{GhostPrint}, a principled attack framework that leverages Supervised Fine-Tuning (SFT) and Knowledge Distillation (KD) to efficiently memorize the behavioral characteristics of the target model over $\Dcal_q$ while preserving normal utility. Furthermore, we incorporate a Ranked Fine-Tuning (RAFT)~\cite{dong2023raft} mechanism, utilizing a surrogate fingerprint model $\ov{g}$ to filter and select weak model outputs that match the strong model's fingerprint. To satisfy the attacker's economic constraints, all optimizations are conducted parameter-efficiently.

\noindent \textbf{Exploiting the Query Distribution $\Dcal_q$}. 
We employ Supervised Fine‑Tuning (SFT) to force the weak model to mimic the target model's fingerprinting behavior. For a fingerprint query $q \sim \Dcal_q$ and the target response $r$ generated by $f_{\mathrm{target}}$, the SFT objective minimizes the negative log‑likelihood:
\begin{align*}\small
    \begin{aligned}
    \mathcal{L}_{\mathrm{SFT}}(\thetab_{\mathrm{adv}}) := - \mathbb{E}_{q,r} \left[ {\sum}_{t=1}^{|r|} \log f_{\mathrm{weak}}(r_t \mid q, r_{<t}; \thetab_{\mathrm{adv}}) \right].
    \end{aligned}
\end{align*}
To preserve general capability without external data, we apply Knowledge Distillation (KD) on the same queries, using the target model as teacher:
\begin{align*}
    \begin{aligned}
    \mathcal{L}_{\mathrm{KD}}(\thetab_{\mathrm{adv}}) := \mathbb{E}_{q, r} \left[ {\sum}_{t=1}^{|r|} D_\mathrm{KL} \big( p_t^{\mathrm{target}} \big\|\; p_t^{\mathrm{weak}} \big) \right],
    \end{aligned}
\end{align*}
where $p_t^{\mathrm{target}}$ and $p_t^{\mathrm{weak}}$ are the next‑token distributions of the target and weak model, respectively. The total objective is
\begin{align}\label{eq:objective}
    \min_{\thetab_{\mathrm{adv}}} \mathcal{L}_{\mathrm{total}} := \mathcal{L}_{\mathrm{SFT}}(\thetab_{\mathrm{adv}}) + \alpha \mathcal{L}_{\mathrm{KD}}(\thetab_{\mathrm{adv}}),
\end{align}
where $\alpha > 0$ is a hyperparameter balancing fingerprint deception and utility.

\noindent \textbf{Exploiting the Fingerprint Mechanism $g$}. 
After optimizing~\eqref{eq:objective}, the weak model mimics the target on the query distribution $\Dcal_q$. However, when $\Dcal_q$ comprises diverse, general-purpose queries (e.g., UltraChat~\cite{ding2023ultrachat}), standard SFT alone may fail to capture the subtle model‑specific fingerprints required by auditors such as LLM‑idio~\cite{sun2025idiosyncrasies} and MET~\cite{gao2025model}. To address this, we directly exploit the structural weakness of the verification model $g$ by incorporating its supervisory signal into the fine‑tuning process.

Directly using $g$ is precluded under the black‑box threat model (Section~\ref{sec:threat_model}). Even if accessible, employing it as a reward model for reinforcement learning (RL)~\citep{guo2025deepseek,li2026ets,li2026reasoning,luo2026sparse} would be prohibitively expensive and unstable, violating the attacker’s economic constraints. We instead construct a local surrogate $\ov{g}$ from the known query distribution $\Dcal_q$, which can be a lightweight classifier (for LLMmap, LLM‑idio) or a reproduced statistical metric (for MET). To leverage $\ov{g}$ without RL, we adopt RAFT~\cite{dong2023raft} as an additional post‑training phase.

Specifically, for each query $q\sim\Dcal_q$, we sample $K$ candidate responses from the current weak model, rank them with $\ov{g}$, and select the most deceptive one as $r_{\mathrm{adv}} := \arg\max_{r_j} \ov{g}(r_j, f_{\mathrm{target}})$. The resulting $(q, r_{\mathrm{adv}})$ pairs are then used to fine‑tune the adapter parameters $\thetab_{\mathrm{adv}}$ with the same SFT and KD objectives defined in~\eqref{eq:objective}. This approach filters the weak model's natural generation space to isolate responses that already cross the surrogate's decision boundaries. By fine-tuning on these selected examples, the model leverages the signal from the target weak classifier $g$, while avoiding the unstable and costly use of RL.

\noindent {\bf Parameter-Efficient Fine-Tuning}. 
To address the economic gain goal shown in Section~\ref{sec:threat_model}, we restrict the attacker to apply Low-Rank Adaptation (LoRA) to the weak model rather than computationally prohibitive full-parameter fine-tuning. 

Let the initial parameters of the weak architecture $f_{\mathrm{weak}}$ be denoted as the set of $N$ targeted base weight matrices $\thetab_{\mathrm{weak}} := \{ \Wb'_i \}_{i=1}^N$. We formalize the deceptive parameter set $\thetab_{\mathrm{adv}}$ by assigning each adapted weight matrix $\widehat{\Wb}_i$ as the sum of the frozen base weight and a low-rank update:
\begin{align}\label{eq:adv_param}
    \thetab_{\mathrm{adv}} := \{ \widehat{\Wb}_i \mid \widehat{\Wb}_i = \Wb'_i + \Ab_i\Bb_i \}_{i=1}^N,
\end{align}
where $r < \min(d_{1,i}, d_{2,i})$ is the adapter rank, and $\Ab_i \in \mathbb{R}^{d_{1,i} \times r}$ and $\Bb_i \in \mathbb{R}^{r \times d_{2,i}}$ are the trainable LoRA matrices for the $i$-th target weight. 

By freezing $\thetab_{\mathrm{weak}}$ and optimizing the set $\Delta\thetab:=\{(\Ab_i, \Bb_i)\}_{i=1}^N$, the adversary transitions the generation distribution to $f_{\mathrm{weak}}(\cdot \mid \cdot; \thetab_{\mathrm{adv}})$ while reducing the computational cost of the attack.

\noindent {\bf Continual Learning of Multiple Fingerprints}. 
Our two-phase training bypasses a single fingerprint. However, auditors might continually deploy new methods. Thus, an adversary must spoof a sequence of verifiers $g^{(\tau)}$ with query distributions $\Dcal_q^{(\tau)}$ for $\tau=1,\dots,T$, updating $f_{\mathrm{weak}}$ without catastrophic forgetting. For each task at continual learning step $\tau$, we train a task‑specific LoRA adapter $\Delta\thetab^{(\tau)}:=\{(\Ab_i^{(\tau)}, \Bb_i^{(\tau)})\}_{i=1}^N$ via our aforementioned process, then integrate adapters with a Mixture‑of‑LoRA‑Experts (MoLA)~\cite{gao2025mola} architecture. It routes queries to relevant adapters, exploiting shared structure across fingerprinting mechanisms.

Specifically, for the $i$-th target weight matrix at continual learning step $\tau$, we introduce a trainable router matrix $\Vb_i^{(\tau)} \in \mathbb{R}^{t \times d_{2,i}}$, where $d_{2,i}$ is the dimension of the intermediate hidden state $\xb$. The router computes a $t$-dimensional logit vector to assign a routing score $s_{i,m}(\xb)$ for the $m$-th expert: 
\begin{align*}
    s_{i,m}(\xb) := \frac{\exp(\Vb_i^{(\tau)}\xb)_m}{\sum_{j =1}^t \exp(\Vb_i^{(\tau)}\xb)_j}.
\end{align*}

The adapted output $\hb_i^{(\tau)}$ for the $i$-th base weight matrix $\Wb'_i$ and input $\xb$ is then computed as a dynamically weighted combination of the historical LoRA experts as: 
\begin{align*}
    \hb_i^{(\tau)} := \Wb'_i \xb + \sum_{m = 1}^\tau s_{i,m}(\xb) \Ab_i^{(m)} \Bb_i^{(m)} \xb.
\end{align*} 

By freezing previous adapters $\Delta\thetab^{(<\tau)}:=\{\Delta\thetab^{(1)}, \Delta\thetab^{(2)}, \dots, \Delta\thetab^{(\tau-1)}\}$ and only training the new adapter $\Delta\thetab^{(\tau)}$ alongside the router $\Vb_i^{(t)}$, GhostPrint continually expands its spoofing capabilities across multiple detection methods while strictly maintaining parameter efficiency. 

To train the router matrices, we reuse the same SFT and KD objectives defined in Eq.~\eqref{eq:objective}. The training data for this routing phase is a mixture of queries from all tasks seen so far, i.e., $\Dcal_q^{(1)} \cup \Dcal_q^{(2)} \dots \cup \Dcal_q^{(\tau)}$. Because we only train the new adapter and a very small set of router weights $\Vb^{(\tau)} := \{\Vb_i^{(\tau)}\}_{i=1}^N$, this continual learning process remains highly efficient. 
\section{Experiments}
\label{sec:experiments}

\subsection{Experimental Settings}
\label{sec:single-fingerprint:settings}

\noindent\textbf{Models and Datasets.} We instantiate our attack across three representative LLM families selected from the model list of LLMmap~\cite{pasquini2025llmmap}. Specifically, we utilize Gemma-1.1 (2B as the weak model and 7B as the strong target)~\cite{gemma}, Qwen2 (1.5B as weak and 7B as strong)~\cite{qwen2}, and Phi-3 (Mini 3.8B as weak and Medium 14B as strong)~\cite{phi3}. To quantify the preservation of general capabilities, model utility is evaluated on three standard benchmarks under a 5-shot setting using the open-source LM Evaluation Harness\footnote{\url{https://github.com/EleutherAI/lm-evaluation-harness}}: MMLU~\cite{hendrycks2020measuring}, GSM8K~\cite{cobbe2021training}, and ARC-Challenge (ARC-C)~\cite{clark2018think}. 

\noindent\textbf{LLM Fingerprints.} We evaluate against three representative black-box LLM fingerprints to audit LLM APIs:
(i) {\it LLMmap}~\cite{pasquini2025llmmap}, a 52-class classifier driven by adversarial identity probes;
(ii) {\it LLM-idiosyncrasies (LLM-idio)}~\cite{sun2025idiosyncrasies}, a binary classifier trained on UltraChat~\cite{ding2023ultrachat} responses;
and (iii) {\it Model Equality Testing (MET)}~\cite{gao2025model}, a two-sample hypothesis test comparing completion distributions via MMD and Hamming distance.

\noindent\textbf{Spoofing Baselines.} We compare against four baselines: 
(i) {\it Query Detection}, where we train a simple MLP to memorize historical fingerprint queries, routing flagged queries to the strong model and all others to the weak model; 
(ii) {\it Rewriting}~\cite{kurian2025attacks}, which uses their proposed prompt to rewrite the generated response; 
(iii) {\it 5-shot ICL}, which extends the rewriting baseline by providing five strong-model responses as in-context examples; and 
(iv) {\it Token Suppression}~\cite{nasery2025robust}, which applies their Suppress Top-$K$ attack to randomize generation and mask the model's standard behavioral patterns. 
For reference, we also report the unmodified baseline performance of both the base weak model and the target strong model. 

\noindent\textbf{Evaluation Setting.} All fingerprint spoofing attacks are evaluated using an 8:2 train-test split for all fingerprint queries. For LLMmap~\cite{pasquini2025llmmap}, we use their official query set. For LLM-idio~\cite{sun2025idiosyncrasies} and MET~\cite{gao2025model}, we use the UltraChat~\cite{ding2023ultrachat} dataset as the fingerprint queries, consistent with the original papers. Due to space limitations, full implementation details are deferred to Appendix~\ref{sec:app:evaluation-details}.

\subsection{Single-Fingerprint Spoofing Performance}
\label{sec:single-fingerprint:main}

\begin{table}[t]
\small
\setlength{\tabcolsep}{4pt}
\renewcommand{\arraystretch}{1.15}
\resizebox{\linewidth}{!}{
\begin{tabular}{lcccccc}
\toprule
\multirow{2}{*}{Method} & \multicolumn{3}{c}{Utility} & \multicolumn{3}{c}{ASR (\%)} \\
\cmidrule(lr){2-4} \cmidrule(lr){5-7}
 & MMLU & GSM8K & ARC-C & LLMmap & LLM-idio & MET\\
\midrule
\multicolumn{7}{c}{\it Gemma-1.1-2B $\rightarrow$ Gemma-1.1-7B} \\
\midrule
Original (weak)          & 37.90 & 10.69 & 44.97 & 12.0          & 7.2           & 6.7  \\
Query Detection           & -- & -- & -- & 35.0          & 7.2           & 13.3 \\
Rewriting                 & -- & -- & -- & 20.0          & 0.7           & 0.0  \\
5-shot ICL                & -- & -- & -- & 15.0          & 22.9          & 3.3  \\
Token Suppression         & -- & -- & -- & 28.0          & 4.5           & 0.0  \\
GhostPrint (ours)         & 38.11 & 8.72  & 44.54 & \textbf{95.0}  & \textbf{58.5} & \textbf{23.3} \\
\cmidrule(lr){1-7}
Original (target) & 58.70 & 49.58 & 56.83 & 92.0          & 98.8          & 100.0 \\
\midrule
\multicolumn{7}{c}{\it Qwen2-1.5B $\rightarrow$ Qwen2-7B} \\
\midrule
Original (weak)          & 55.55 & 55.42 & 43.52 & 4.0           & 7.4           & 33.3 \\
Query Detection           & -- & -- & -- & 70.0          & 7.4           & \bf{40.0} \\
Rewriting                 & -- & -- & -- & 40.0          & 34.2          & 33.3 \\
5-shot ICL                & -- & -- & -- & 0.0           & 64.4          & 30.0 \\
Token Suppression         & -- & -- & -- & 6.0           & 4.9           & 6.7  \\
GhostPrint (ours)         & 55.36 & 55.80 & 44.97 & \textbf{76.0} & \textbf{71.3} & \bf{40.0} \\
\cmidrule(lr){1-7}
Original (target) & 70.67 & 72.93 & 62.12 & 81.0          & 98.6          & 96.7 \\
\midrule
\multicolumn{7}{c}{\it Phi-3-mini $\rightarrow$ Phi-3-medium} \\
\midrule
Original (weak)          & 70.37 & 80.36 & 62.97 & 6.0           & 28.8          & 43.3 \\
Query Detection           & -- & -- & -- & 70.0  & 28.8          & 50.0 \\
Rewriting                 & -- & -- & -- & 0.0           & 38.6          & 33.3 \\
5-shot ICL                & -- & -- & -- & 0.0           & \textbf{70.5} & 36.7 \\
Token Suppression         & -- & -- & -- & 5.0           & 51.2          & 6.7  \\
GhostPrint (ours)         & 69.70 & 78.01 & 62.80 & \textbf{80.0} & 64.0          & \textbf{60.0} \\
\cmidrule(lr){1-7}
Original (target) & 78.12 & 85.52 & 66.38 & 91.0          & 77.2          & 96.7 \\
\bottomrule
\end{tabular}}
\caption{
\textbf{Evaluation of single-fingerprint spoofing across three LLM families}. The table reports the Attack Success Rate (ASR) alongside general model utility. A successful attack maximizes ASR (best results bolded) while preserving utility scores comparable to the base weak model. Entries marked with ``--'' denote inference-time baselines that do not tune the weak model, where the model utility remains identical to the original weak model.
}
\label{tab:fingerprint-spoofing-main}
\vspace{-0.1in}
\end{table}
Table~\ref{tab:fingerprint-spoofing-main} reports the single-fingerprint spoofing performance of GhostPrint, with full implementation details of GhostPrint deferred to Appendix~\ref{sec:app:per-adapter}. The results demonstrate that our framework successfully achieves the attacker's primary goals shown in Section~\ref{sec:threat_model}. \textbf{(i)} Regarding \textit{fingerprint spoofing}, GhostPrint attains the highest Attack Success Rate (ASR) across all evaluated fingerprints for nearly all weak model families, with the only exception of the 5-shot ICL baseline on LLM-idio. Notably, when evaluating the Gemma family against LLMmap, GhostPrint achieves a 95\% ASR, even surpassing the original target model's natural classification rate of 92\%. \textbf{(ii)} In terms of \textit{utility preservation}, GhostPrint maintains downstream performance that is almost identical to the base weak models. While inference-time baselines inherently preserve utility by avoiding weight updates, our parameter-efficient fine-tuning incurs only marginal performance drops or even slight improvements, confirming that the attack successfully preserves the weak model's general capabilities on normal tasks.

\subsection{Cross-Model Spoofing Performance}
\label{sec:single-fingerprint:cross-model}

In this section, we extend the within model family setting in Section~\ref{sec:single-fingerprint:main} to the more challenging cross-family setting. Specifically, we study two transfer scenarios: Gemma-1.1-2B to Qwen2-7B, and Qwen2-1.5B to Gemma-1.1-7B. From the results in Table~\ref{tab:cross-model}, we observe: \textbf{(i)} Cross-family spoofing is inherently more difficult because different model architectures exhibit distinct baseline linguistic styles and vocabulary distributions; and \textbf{(ii)} Despite this architectural gap, GhostPrint still achieves notable success, successfully bypassing LLM-Idio in both settings and MET in the Gemma-to-Qwen transfer. 

For completeness, we also evaluate cross-family spoofing between models of the same scale in Appendix~\ref{sec:app:more-cross-model}. However, as this does not fit the financial gain goals for the adversary, it falls outside of our threat model and is put in the appendix.

\begin{table}[t]
\centering
\small
\setlength{\tabcolsep}{4pt}
\renewcommand{\arraystretch}{1.1}
\resizebox{\linewidth}{!}{
\begin{tabular}{llccc}
\toprule
\multirow{2}{*}{Setting} & \multirow{2}{*}{Method} & \multicolumn{3}{c}{ASR (\%)} \\
\cmidrule(lr){3-5}
 & & LLMmap & LLM-idio & MET \\
\midrule
\multirow{7}{*}{\shortstack[l]{Gemma-2B \\ $\to$ Qwen2-7B}}
 & Original (weak)   & 0.0   & 0.7   & 0.0   \\
 & Query Detection    & 0.0   & 2.4   & 6.7 \\
 & Rewriting          & 0.0   & 0.1   & 0.0 \\
 & 5-shot ICL         & 0.0   & 3.6   & 0.0 \\
 & Token Suppression  & 0.0   & 2.0   & 6.7 \\
 & GhostPrint (ours)  & 0.0   & \textbf{95.2} & \textbf{33.3} \\
\cmidrule(lr){2-5}
 & Original (target) & 80.0  & 99.9  & 96.7  \\
\midrule
\multirow{7}{*}{\shortstack[l]{Qwen2-1.5B \\ $\to$ Gemma-7B}}
 & Original (weak)   & 0.0   & 1.1   & 0.0   \\
 & Query Detection    & 0.0   & 2.5   & \textbf{6.7} \\
 & Rewriting          & 0.0   & 1.8   & 0.0 \\
 & 5-shot ICL         & 0.0   & 17.3  & 0.0 \\
 & Token Suppression  & \textbf{15.0} & 1.7   & 0.0 \\
 & GhostPrint (ours)  & 6.0   & \textbf{96.7} & \textbf{6.7}  \\
\cmidrule(lr){2-5}
 & Original (target) & 97.0  & 99.2  & 86.7  \\
\bottomrule
\end{tabular}}
\caption{\textbf{Evaluation of cross-family weak-to-strong spoofing}. The table reports the Attack Success Rate (ASR) when adapting a weak model from one architectural family to spoof a strong model from a completely different family. The highest ASR among attack methods is bolded. 
}

\vspace{-0.1in}
\label{tab:cross-model}
\end{table}

\subsection{Ablation Study}
\label{sec:single-fingerprint:kd}

\noindent\textbf{Knowledge-Distillation Loss.} 
In this study, we evaluate the stability of knowledge distillation under different coefficients $\alpha$ in Eq.~\eqref{eq:objective}. We also investigate whether the small model or the large model should serve as the reference anchor. Based on the results in Figure~\ref{fig:kd-ablation}, we find that: (i) using the large model as the anchor gives a slightly higher fingerprint spoofing ASR, but utilizing the small model remains highly effective. This demonstrates that our attack is effective in both scenarios. Whether the target model is accessed via responses only or through softmax probabilities that enable KD, the attack succeeds. (ii) The impact of the KD loss is generally stable across different weights, achieving the optimal ASR at approximately $\alpha = 0.25$.

\begin{figure}[!t]
  \centering
  \includegraphics[width=\linewidth]{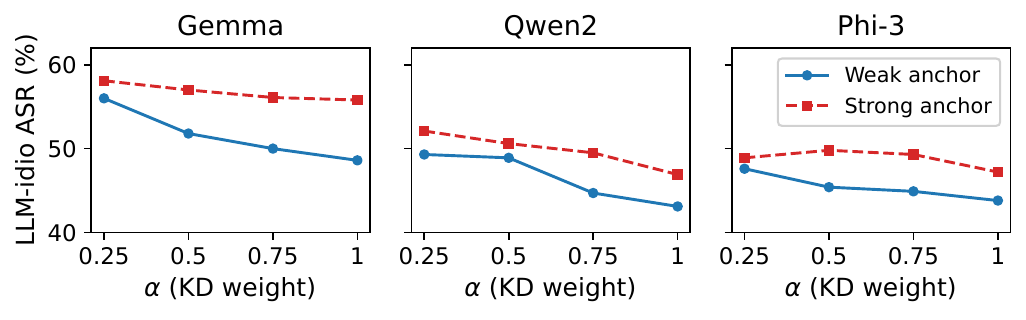}
  \caption{\textbf{Ablation study on the knowledge-distillation anchor model type and weight coefficient $\alpha$}. We report LLM-idio ASR (\%) for each family under the weak- and strong-model anchors.}
  \label{fig:kd-ablation}
\end{figure}

\label{sec:single-fingerprint:raft}

\noindent\textbf{Impact of RAFT }. 
To ensure that fingerprint spoofing fine-tuning effectively exploits the vulnerabilities of weak classifiers, we utilize RAFT to select queries that are more adversarial to fingerprint classifiers. The results are summarized in Table~\ref{tab:raft-ablation}. As shown in the table, incorporating a 50\% ratio of RAFT queries ($\lambda_{\mathrm{raft}}=0.5$) during fine-tuning alongside standard queries significantly improves the ASR. This demonstrates that strategically leveraging weak classifier vulnerabilities directly enhances spoofing capabilities, validating our theoretical analysis in Section~\ref{sec:theory}.

\begin{table}[t]
\centering
\small
\setlength{\tabcolsep}{4pt}
\renewcommand{\arraystretch}{1.15}
\resizebox{\linewidth}{!}{
\begin{tabular}{lcccc}
\toprule
\multirow{2}{*}{Family} & \multicolumn{2}{c}{LLM-idio ASR (\%)} & \multicolumn{2}{c}{MET ASR (\%)} \\
\cmidrule(lr){2-3} \cmidrule(lr){4-5}
& $\lambda_{\mathrm{raft}}=0$ & $0.5$ & $0$ & $0.5$ \\
\midrule
\textit{Gemma-1.1-2B$\to$7B}   & 54.7 & \textbf{58.8} & 6.7  & \textbf{10.0}     \\
\textit{Qwen2-1.5B$\to$7B}     & 56.3 & \textbf{63.4} & 36.7 & \textbf{40.0}   \\
\textit{Phi-3-mini$\to$medium} & 49.1 & \textbf{54.8} & 40.0 & \textbf{43.3} \\
\bottomrule
\end{tabular}}
\caption{\textbf{Ablation study on RAFT-guided query selection}. The parameter $\lambda_{\mathrm{raft}}$ represents the proportion of RAFT-selected queries mixed into the fine-tuning dataset, with the highest ASR bolded.}
\label{tab:raft-ablation}
\end{table}

\subsection{Continual Fingerprint Spoofing}
\label{sec:single-fingerprint:mola}

Here, we evaluate our mixture-of-LoRA-experts (MoLA) framework for the continual learning of multiple fingerprints. Specifically, we sequentially train three separate sets of LoRA parameters tailored to three distinct fingerprinting methods: {\it LLMmap}, {\it LLM-idio}, and {\it MET}. We then evaluate different combinations of these fingerprinted experts routed through a Mixture-of-Experts (MoE) architecture to see how effectively the combined parameters preserve and integrate capabilities. 

Our main results are summarized in Table~\ref{tab:mola}, from which we draw the following key observations: (i) After continual learning via integrating different parameters into the MoLA framework, performance experiences only minor degradation and remains at a high level. Notably, in the 3-way configurations for Qwen and Phi-3, some fingerprint spoofing results even outperform their standalone counterparts. This demonstrates that our framework can continually adapt to new fingerprints while retaining prior knowledge without catastrophic forgetting. (ii) Even when an expert is not explicitly trained on a specific fingerprint, the integrated MoLA framework demonstrates an inherent ability to generalize. For instance, in the {LLMmap} + {LLM-idio} routing configuration on Qwen2 and Phi-3, despite the complete omission of the {MET} expert during sequential training, the evaluation scores on {MET} closely approach those of configurations where it was explicitly included. This suggests that the compositionality of the learned experts facilitates cross-fingerprint knowledge transfer.

\begin{table}[t]
\centering
\small
\setlength{\tabcolsep}{4pt}
\renewcommand{\arraystretch}{1.1}
\resizebox{\linewidth}{!}{
\begin{tabular}{llccc}
\toprule
\multirow{2}{*}{Family} & \multirow{2}{*}{Router Config} & \multicolumn{3}{c}{Attack ASR (\%)} \\
\cmidrule(lr){3-5}
 & & LLMmap & LLM-idio & MET \\
\midrule
\multirow{4}{*}{\textit{Gemma-1.1-2B$\to$7B}}
 & LLMmap + LLM-idio    & 86.0 & 54.2 & 10.0  \\
 & LLMmap + MET         & \textbf{89.0} & 13.2 & 10.0  \\
 & LLM-idio + MET       & 20.0 & \textbf{58.6} & \textbf{30.0} \\
 & 3-way (all)          & 86.0 & 54.8 & 16.7 \\
\midrule
\multirow{4}{*}{\textit{Qwen2-1.5B$\to$7B}}
 & LLMmap + LLM-idio    & 75.0 & \textbf{58.4} & 23.3  \\
 & LLMmap + MET         & 74.0 & 55.5 & 16.7  \\
 & LLM-idio + MET       & 20.0 & 57.1 & 26.7  \\
 & 3-way (all)          & \textbf{81.0} & 57.6 & \textbf{26.7}  \\
\midrule
\multirow{4}{*}{\textit{Phi-3-mini$\to$medium}}
 & LLMmap + LLM-idio    & 71.0 & \textbf{53.0} & 43.3 \\
 & LLMmap + MET         & 70.0 & 27.5          & 43.3          \\
 & LLM-idio + MET       & 0.0           & 49.3          & \textbf{60.0}          \\
 & 3-way (all)          & \textbf{77.0} & 50.6 & 53.3 \\
\bottomrule
\end{tabular}}
\caption{\textbf{Continual fingerprint spoofing performance using MoLA across different task settings}. Bold values indicate the highest ASR achieved within each model family and column.}
\label{tab:mola}
\vspace{-0.1in}
\end{table}

\section{Conclusion}

We introduce fingerprint spoofing and propose {GhostPrint}, a parameter-efficient attack combining surrogate modeling, reward-ranked fine-tuning, and knowledge distillation, extended with a Mixture-of-LoRA router to unify multiple auditors.
GhostPrint consistently bypasses four auditors across three model families while preserving utility, exposing a critical vulnerability in current black-box LLM fingerprint methods.

\section*{Limitations}
Due to academic compute limits, GhostPrint is not evaluated on frontier-scale models (e.g., Qwen-3 32B), which is an important next step.
Future work may incorporate reinforcement learning with the surrogate classifier as a reward model to improve alignment.
While our attack exposes a vulnerability that could be misused, this is a research prototype intended to motivate robust fingerprinting and we do not foresee a potential negative societal impact.
LLMs were used only for grammar checking and language quality, and the authors are fully responsible for all technical content.

\bibliography{ref}


\clearpage
\appendix

\begin{center}
    {\LARGE \bf Appendix}
\end{center}

\section*{List of Contents}
In this appendix, we provide the following additional information: 

\begin{itemize}[leftmargin=0.05\linewidth,nosep]
    \applist
\end{itemize}

\appsection{Implementation Details of Experiments}
\label{sec:app:evaluation-details}

\subsection{Fingerprinting Auditors}
\label{sec:app:fingerprints}
For LLMmap, we use the released set of 800 adversarial identity probes, each concatenated with a system prompt and CoT/RAG‑style prompts as in the original work. The 800 probes are split 8:2 (640 training, 160 test), and we use 100 probes to evaluate ASR.

For LLM-idio, we follow the original setting: a per‑family binary classifier is trained on 10,000 UltraChat~\cite{ding2023ultrachat} responses from each of the small and large models. The same prompts are used for training the attack adapters and for evaluation; a held‑out 20\% subset (2,000 prompts) serves as the test set.

For MET, the two‑sample test is applied to completions from the UltraChat prompt set. For the main‑table fair‑budget setting we use $N=1$ prompt and $K=10$ completions per prompt, subsampled from the full $N=50, K=20$ collection. The test statistic is $\mathrm{MMD}_{\mathrm{hamming}}$ with a permutation $p$‑value (100 permutations, $\alpha=0.05$). ASR is the fraction of 30 independent random seeds for which MET fails to reject $H_0$. The full $(N,K)$ sweep is described in Appendix~\ref{sec:app:met-complexity}.

\subsection{Baseline Settings}
\label{sec:app:baselines}

For Query Detection, we train a defender-side classifier that decides at audit time whether to serve the small or the large model. The classifier is a two-layer MLP on top of frozen \texttt{all-MiniLM-L6-v2} sentence embeddings, trained to distinguish LLMmap identity probes from UltraChat prompts. At evaluation, each query is routed to the large model's pre-collected response if flagged as a probe and to the small model's response otherwise. This is an oracle baseline because the defender knows the probe distribution; utility is unchanged on prompts the detector rejects.

For Rewriting~\citep{kurian2025attacks}, the small model first answers the query, then is re-prompted with a fixed style-laundering template that asks it to rewrite the response ``to be more comprehensive, detailed, and well-reasoned, as if written by a highly knowledgeable assistant with deep expertise.'' Both passes use the same small model; only the second pass output is audited. Sample sizes per auditor match the Original (small) row.

For 5-shot ICL, we sample five question-answer pairs from the large model's UltraChat responses and prepend them as an in-context demonstration block before every query that the small model answers. For LLMmap, the same five pairs are prepended to every identity probe before evaluation; for LLM-idio and MET, they are prepended to every UltraChat prompt.

For Token Suppression~\citep{nasery2025robust}, we follow the SuppressTop-$k$ attack. At decoding time, for the first $n=16$ generation positions we mask the $k=4$ highest-probability tokens (setting their logits to $-\infty$) before sampling; subsequent positions decode normally. The mask is implemented as a HuggingFace \texttt{LogitsProcessor} and applied uniformly across all three auditors, using the same temperature, prompt set, and sample count as Original (small).

\subsection{Training Settings}
We adopt the LoRA-based deception attack from Section~\ref{sec:proposed}, with $r=16$, $\alpha=32$, dropout $0.05$, $2$ epochs, learning rate $5\times 10^{-5}$, batch size $4$, AdamW with weight decay $0.01$, and a knowledge-distillation term ($\lambda_{\mathrm{kd}}=0.5$, $\tau=2.0$) against a frozen large teacher. We set $\lambda_{\mathrm{kd}}=0.5$ as the midpoint of the sweep in Figure~\ref{fig:kd-ablation}, which shows that ASR is insensitive to $\lambda_{\mathrm{kd}}$ across $[0.25, 1.0]$. The RAFT-style ablation of Section~\ref{sec:single-fingerprint:raft} uses $N=8$ candidates per prompt, $3$ rounds, and $\lambda_{\mathrm{raft}}=0.5$.

\subsection{MET Sample Complexity}\label{sec:app:met-complexity}

LLMmap and LLM-idio use one query per audit; MET defaults to $(N=20,K=5)$, costing $100\times$ more generations. The main paper therefore reports MET at the fair budget $(1,10)$. Here we characterize detection power by subsampling from a $(50,20)$ pool, computing ASR over $30$ seeds.

\begin{figure}[t]
  \centering
  \includegraphics[width=\linewidth]{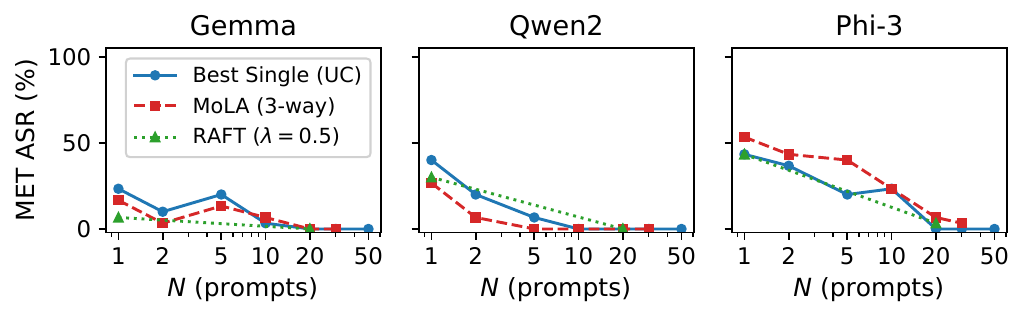}
  \caption{\textbf{MET sample complexity at $K=10$ for three spoofing attacks across model families}. ASR is averaged over 30 resampling seeds.}
  \label{fig:met-complexity}
\end{figure}

At the strong budget on Phi‑3, our MoLA adapter achieves a mean permutation $p=0.086>\alpha$, the first time MET fails to reject $H_0$. Its ASR stays above $50\%$ until $N>30$, while the best single‑fingerprint adapter drops below $50\%$ at $N=20$. On other families, curves overlap, consistent with family‑dependent trade‑offs.

Even though all adapters are rejected at the full $(50,20)$ budget, that setting costs $1{,}000$ generations per audit and is impractical for frequent use. Auditors operating at lower budgets, including the fair $(1,10)$ and the intermediate $(20,5)$ default, leave many attacks undetected, especially on Phi‑3 where the scale gap is smallest.

\appsection{Additional Experimental Results}\label{app:results}
\subsection{Per-adapter Single-Fingerprint Breakdown}\label{sec:app:per-adapter}

Table~\ref{tab:per-adapter} reports each of the 18 single-fingerprint LoRA variants we train per family, organised by training data (LLMmap probes vs.\ UltraChat) and KD teacher (no KD, student-size self-distillation, target-size teacher). The ``GhostPrint (ours)'' rows in Table~\ref{tab:fingerprint-spoofing-main} pick the column-wise maximum per family across these 18 cells.

\begin{table}[!ht]
\centering
\small
\setlength{\tabcolsep}{4pt}
\renewcommand{\arraystretch}{1.1}
\resizebox{\linewidth}{!}{
\begin{tabular}{lllccc}
\toprule
Family & Training Data & KD Teacher & LLMmap & LLM-idio & MET \\
\midrule
\multirow{6}{*}{Gemma}
& LLMmap     & --        & \textbf{95.0}  & 31.2          & 13.3          \\
& LLMmap     & 2B        & 84.0           & 14.7          & 3.3           \\
& LLMmap     & 7B        & 87.0           & 16.8          & 0.0           \\
& UltraChat  & --        & 19.0           & \textbf{58.5} & \textbf{23.3} \\
& UltraChat  & 2B        & 21.0           & 52.0          & 13.3          \\
& UltraChat  & 7B        & 28.0           & 55.9          & 13.3          \\
\midrule
\multirow{6}{*}{Qwen2}
& LLMmap     & --        & 74.0           & 70.0          & 26.7          \\
& LLMmap     & 1.5B      & \textbf{76.0}  & 41.5          & 30.0          \\
& LLMmap     & 7B        & \textbf{76.0}  & \textbf{71.3} & 36.7          \\
& UltraChat  & --        & 21.0           & 50.2          & \textbf{40.0} \\
& UltraChat  & 1.5B      & 28.0           & 51.0          & 36.7          \\
& UltraChat  & 7B        & 35.0           & 51.7          & \textbf{40.0} \\
\midrule
\multirow{6}{*}{Phi-3}
& LLMmap     & --        & \textbf{80.0}  & \textbf{64.0} & 26.7          \\
& LLMmap     & mini      & 77.0           & 35.9          & 43.3          \\
& LLMmap     & medium    & 74.0           & 35.2          & 33.3          \\
& UltraChat  & --        & 10.0           & 58.0          & 33.3          \\
& UltraChat  & mini      & 5.0            & 53.0          & 36.7          \\
& UltraChat  & medium    & 0.0            & 56.0          & \textbf{60.0} \\
\bottomrule
\end{tabular}}
\caption{\textbf{Detailed evaluation across all 18 LoRA variants}. The results are categorized by model family and training configurations (dataset $\times$ KD teacher selection). The best performance within each family and column is bolded.}
\label{tab:per-adapter}
\end{table}

\subsection{More Cross-Model Spoofing Results}\label{sec:app:more-cross-model}

\begin{table}[t]
\centering
\small
\setlength{\tabcolsep}{4pt}
\renewcommand{\arraystretch}{1.1}
\resizebox{\linewidth}{!}{
\begin{tabular}{llccc}
\toprule
\multirow{2}{*}{Setting} & \multirow{2}{*}{Method} & \multicolumn{3}{c}{ASR (\%)} \\
\cmidrule(lr){3-5}
 & & LLMmap & LLM-idio & MET \\
\midrule
\multirow{7}{*}{\shortstack[l]{Gemma-2B \\ $\to$ Qwen2-1.5B}}
 & Original (weak)   & 0.0   & 1.6   & 0.0   \\
 & Query Detection    & 0.0   & 3.2   & 6.7 \\
 & Rewriting          & 0.0   & 0.0   & 3.3 \\
 & 5-shot ICL         & 0.0   & 0.8   & 0.0 \\
 & Token Suppression  & 1.0   & 2.2   & 16.7 \\
 & GhostPrint (ours)  & \textbf{4.0} & \textbf{97.5} & \textbf{46.7} \\
\cmidrule(lr){2-5}
 & Original (target) & 78.0  & 99.5  & 100.0 \\
\midrule
\multirow{7}{*}{\shortstack[l]{Qwen2-1.5B \\ $\to$ Gemma-2B}}
 & Original (weak)   & 0.0   & 0.8   & 0.0   \\
 & Query Detection    & 0.0   & 2.5   & 6.7 \\
 & Rewriting          & 0.0   & 8.9   & 0.0 \\
 & 5-shot ICL         & 0.0   & 19.4  & 0.0 \\
 & Token Suppression  & 16.0  & 1.0   & 3.3 \\
 & GhostPrint (ours)  & \textbf{38.0} & \textbf{97.9} & \textbf{20.0} \\
\cmidrule(lr){2-5}
 & Original (target) & 84.0  & 99.1  & 93.3  \\
\bottomrule
\end{tabular}}
\caption{\textbf{Evaluation of cross-family weak-to-weak spoofing}. The table reports the Attack Success Rate (ASR) when adapting a weak model from one architectural family to spoof a strong model from a completely different family. The highest ASR among attack methods is bolded. 
}

\vspace{-0.1in}
\label{tab:cross-model-more}
\end{table}

For completeness, we provide additional results to investigate whether models of similar scale can serve as fingerprints for other model families. As shown in Table~\ref{tab:cross-model-more}, our proposed GhostPrint framework consistently outperforms all baselines. Notably, our results on LLM-idio across both settings are particularly notablez, showing almost no performance difference compared to the target model's original fingerprint.
\appsection{Additional Related Works}\label{sec:app:more_related}

\subsection{LLM API Overcharging Risks}
The proliferation of commercial LLM APIs under pay-per-use pricing has raised concerns about provider dishonesty at multiple levels. 
At the model level, providers may silently substitute promised strong models with weaker alternatives~\cite{cai2025you,zhang2026shadowapis}, directly misrepresenting service quality.
At the token level, hidden reasoning tokens in opaque APIs introduce invisible billing overhead~\cite{sun2025coin,sun2025invisible}, with tokenization transparency and incentive misalignment further analyzed in~\cite{velasco2025your}.
Broader resource-side threats, such as compute, memory, and API abuse, are surveyed in~\cite{zhang2026resource}.
Early defenses address both directions: a game-theoretic auditing mechanism targets dishonest model-level providers~\cite{cao2026pay}, while statistical and predictive auditing schemes tackle pay-per-token billing fraud~\cite{velasco2026auditing,wang2025predictive}.
Our work identifies a new threat distinct from previous model substitution and token overcharging, where a malicious provider can use parameter-efficient fine-tuning to adapt a weak model to evade fingerprint-based auditing, making dishonest model deployments harder to detect.

\subsection{LLM Fingerprinting} 

\noindent {\bf White-box Non-Intrusive Fingerprinting}. 
White-box fingerprinting assumes direct weight access and verifies ownership through internal model analysis without modifying model behavior.
Parameter and gradient signatures form one primary axis: \citet{zheng2022dnn} establish non-repudiable ownership certificates from weight statistics, and TensorGuard~\cite{wu2025tensorguard} classifies model families via gradient-based features.
Representation-based approaches extract fingerprints from internal activations: REEF~\cite{zhang2025reef} encodes layer-wise hidden states as compact fingerprints, AttnDiff~\cite{zhang2026attndiff} exploits differential attention maps for identification, and EasyDetector~\cite{zhang2024easydetector} detects provenance via linear probes on intermediate features.
HuRef~\cite{zeng2024huref} constructs human-interpretable fingerprints from weight-invariant vectors, and \citet{shao2025sok} provides a systematic survey of LLM copyright auditing spanning the full white-box landscape.

\noindent {\bf Non-Intrusive Black-Box Fingerprinting}. 
Black-box fingerprinting operates through API queries alone, without parameter access.
\textit{General-query methods} derive model identity from naturally elicited responses.
For API integrity auditing, statistical hypothesis tests detect model substitution by comparing output distributions~\cite{gao2025model,zhu2026auditing} and behavioral consistency~\cite{cai2025you,nikolic2025model}.
Idiosyncrasy-based approaches exploit persistent behavioral quirks~\cite{sun2025idiosyncrasies,mcgovern2025your}, characteristic error-space signatures~\cite{zang2025errortrace}, and side-channel query statistics~\cite{pasquini2025llmmap} as inherent fingerprints.
\textit{Specific-query methods} craft targeted inputs to elicit discriminative signals.
Adversarial honeypot strategies embed verifiable traps for IP attribution~\cite{gubri2024trap,jin2024proflingo}.
Token-level probing exploits under-trained tokens~\cite{cai2025utf} and training corpus artifacts~\cite{zhang2025speculating} as covert signals.
Further directions include membership-inference-based provenance~\cite{kuditipudi2025blackbox}, chain-of-thought fingerprinting~\cite{ren2025cotsrf}, zeroth-order gradient estimation~\cite{shao2026reading}, sensitivity-guided query selection~\cite{hu2025resf}, code model tracing~\cite{zhang2025clmtracing}, and dual-level IP protection frameworks~\cite{yan2026duffin}.

\noindent {\bf Intrusive Fingerprinting}. 
Intrusive methods embed verifiable fingerprints during model training via backdoor mechanisms.
Early work establishes backdoor watermarking baselines for pre-trained language models~\cite{gu2022watermarking,li2023plmmark}.
Instruction-triggered backdoors bind ownership responses to specific prompt patterns~\cite{xu2024instructional,xu2025ctcc}, and LoRA-based transfer propagates fingerprints into fine-tuned derivatives~\cite{xu2025unlocking}.
Robustness under post-hoc model modification is a central challenge: merge-resistant designs~\cite{yamabe2025mergeprint,he2025routemark} and fine-tuning-resistant schemes~\cite{tang2025towards,nasery2025scalable} address editing and adaptation threats.
Semantically richer fingerprints are pursued via conditioned watermarks~\cite{gloaguen2026llm}, random-seed fingerprinting~\cite{tong2026seedprints}, and functional evolution tracing~\cite{wu2026llm}.
Additional contributions include encrypted ownership verification~\cite{xiong2026iseal}, plug-and-play fingerprint generation~\cite{chen2026prompt2fingerprint}, open-source misuse detection~\cite{xu2025mark}, tamper attribution~\cite{bai2025esf}, and analysis of the fundamental difficulty of black-box origin identification~\cite{yang2025challenge}.

\newcommand{\rank}{\mathrm{rank}}
\appsection{Detailed Proofs}\label{sec:app:detail_proof}

{\bf Notations.} For a set of vectors $\{\mathbf{x}_1, \dots, \mathbf{x}_k\}$, $\mathrm{span}\{\mathbf{x}_1, \dots, \mathbf{x}_k\}$ denotes the set of all their linear combinations. For a matrix $\mathbf{A} \in \mathbb{R}^{d_1 \times d_2}$, we use $\mathrm{col}(\mathbf{A}) := \{\mathbf{A}\mathbf{x} \mid \mathbf{x} \in \mathbb{R}^{d_2}\}$ to denote its column space, and $\mathrm{null}(\mathbf{A}) := \{\mathbf{x} \in \mathbb{R}^{d_2} \mid \mathbf{A}\mathbf{x} = \mathbf{0}\}$ to denote its null space. Finally, $\tr(\mathbf{A})$ denotes the trace of a square matrix $\mathbf{A}$, defined as the sum of its diagonal elements.

\subsection{Basic Tools}

\begin{lemma}[The Eckhart-Young Theorem, adapted from page 79 of~\cite{golub2013matrix}]\label{lem:eckhart_young}
Let $\mathbf{A} \in \mathbb{R}^{d_1 \times d_2}$ be a matrix with rank $r$, and let its singular value decomposition be given by $\mathbf{A} = \sum_{i=1}^r \sigma_i \mathbf{u}_i \mathbf{v}_i^\top$, where $\sigma_1 \ge \sigma_2 \ge \dots \ge \sigma_r > 0$ are the non-zero singular values, and $\mathbf{u}_i, \mathbf{v}_i$ are the corresponding left and right singular vectors. 

For any integer $k < r$, let $\mathbf{A}_k:= \sum_{i=1}^k \sigma_i \mathbf{u}_i \mathbf{v}_i^\top$ denote the rank-$k$ truncated singular value decomposition of $\mathbf{A}$. 

Then, the optimal rank-$k$ approximation of $\mathbf{A}$ under the Frobenius norm and spectral norm is $\mathbf{A}_k$, i.e., 
\begin{align*}\small
    \begin{aligned}
    \argmin_{\mathrm{rank}(\mathbf{B})=k} \|\mathbf{A} - \mathbf{B}\|_2 = \argmin_{\mathrm{rank}(\mathbf{B})=k} \|\mathbf{A} - \mathbf{B}\|_F = \Ab_k. 
    \end{aligned}
\end{align*}
\end{lemma}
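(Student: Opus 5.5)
The plan is to prove that $\Ab_k$ attains the minimum in both norms by establishing two facts for each norm. First, every $\Bb$ with $\rank(\Bb)\le k$ satisfies a lower bound. Second, $\Ab_k$ achieves that bound with equality. Since $\Ab - \Ab_k = \sum_{i=k+1}^r \sigma_i \ub_i\vb_i^\top$ is itself written in SVD form, orthonormality of the singular vectors immediately gives
\begin{align*}
\|\Ab-\Ab_k\|_2=\sigma_{k+1}, \qquad \|\Ab-\Ab_k\|_F^2=\sum_{i=k+1}^r\sigma_i^2 .
\end{align*}
So the whole task is the two matching lower bounds.

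\textbf{Spectral norm.} I will use a dimension-counting argument. Fix $\Bb$ with $\rank(\Bb)\le k$. By rank--nullity, $\dim\mathrm{null}(\Bb)\ge d_2-k$. The subspace $V_{k+1}:=\mathrm{span}\{\vb_1,\dots,\vb_{k+1}\}$ has dimension $k+1$, so the two subspaces intersect nontrivially. Pick a unit vector $\xb=\sum_{i\le k+1}c_i\vb_i$ in the intersection. Then $(\Ab-\Bb)\xb=\Ab\xb$, and
\begin{align*}
\|\Ab\xb\|_2^2=\sum_{i\le k+1}\sigma_i^2c_i^2\ge\sigma_{k+1}^2 .
\end{align*}
Hence $\|\Ab-\Bb\|_2\ge\sigma_{k+1}$.

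\textbf{Frobenius norm.} I will first prove the Weyl-type inequality $\sigma_{i+j-1}(\Xb+\Yb)\le\sigma_i(\Xb)+\sigma_j(\Yb)$. The proof is the same subspace argument as above. Let $\Xb_{i-1}$ and $\Yb_{j-1}$ be the truncated SVDs of $\Xb$ and $\Yb$. Then $\Xb_{i-1}+\Yb_{j-1}$ has rank at most $i+j-2$, and by the spectral-norm result,
\begin{align*}
\sigma_{i+j-1}(\Xb+\Yb)\le\|\Xb+\Yb-\Xb_{i-1}-\Yb_{j-1}\|_2\le\sigma_i(\Xb)+\sigma_j(\Yb).
\end{align*}
Now apply this with $\Xb=\Ab-\Bb$, $\Yb=\Bb$ and $j=k+1$. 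Since $\sigma_{k+1}(\Bb)=0$, this gives $\sigma_{i+k}(\Ab)\le\sigma_i(\Ab-\Bb)$ for every $i$. Squaring and summing over $i=1,\dots,r-k$ yields
\begin{align*}
\|\Ab-\Bb\|_F^2\ge\sum_{i>k}\sigma_i^2(\Ab).
\end{align*}

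\textbf{Main obstacle: the ``$\argmin$'' as stated.} The lemma claims equality of argmin sets, which is stronger than what the argument above gives.
\begin{itemize}[leftmargin=*, nosep, topsep=2pt]
\item In the spectral norm the minimizer is generally \emph{not} unique. For example, $\Ab_k+\epsilon\,\sigma_{k+1}\ub_1\vb_1^\top$ also attains $\sigma_{k+1}$ for small $\epsilon$.
\item In the Frobenius norm the minimizer is unique only when $\sigma_k>\sigma_{k+1}$.
\end{itemize}
I will therefore state precisely what is proved: $\Ab_k$ is a minimizer in both norms, which is all that is used later (e.g., in Proposition~\ref{prop:spoofing_lb_informal} and Theorems~\ref{thm:simple_dist_spoofing}--\ref{thm:classifier_spoofing}).

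If uniqueness in the Frobenius case is wanted under the gap condition, I would argue as follows. Equality in the summed inequality forces $\sigma_i(\Ab-\Bb)=\sigma_{i+k}(\Ab)$ for all $i$. Combined with von Neumann's trace inequality, this forces $\Bb$ to share the top-$k$ singular subspaces of $\Ab$. That step is the most delicate and I would treat it as a remark rather than part of the main proof.

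Finally, the constraint ``$\rank=k$'' versus ``$\rank\le k$'' is harmless, because $\Ab_k$ has rank exactly $k$ when $k<r$.
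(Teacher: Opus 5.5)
Your proposal is correct and more complete than what the paper provides: the paper does not prove this lemma at all, it only cites Golub--Van Loan. Your spectral-norm argument (a unit vector in $\mathrm{null}(\Bb)\cap\mathrm{span}\{\vb_1,\dots,\vb_{k+1}\}$) is the standard proof in that reference. It is also the argument the paper writes out in Part~2 of the proof of Lemma~\ref{lem:eckhart_young_variant}.

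For the Frobenius case the paper offers nothing beyond the citation. Your route through Weyl's inequality $\sigma_{i+j-1}(\mathbf{X}+\mathbf{Y})\le\sigma_i(\mathbf{X})+\sigma_j(\mathbf{Y})$ is a standard complete proof. Its advantage is that it bootstraps entirely from the spectral case, so the lower bound needs no trace or von Neumann inequality.

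One piece of bookkeeping is missing. The Weyl step invokes the spectral bound for an arbitrary matrix $\mathbf{M}=\mathbf{X}+\mathbf{Y}$ at an arbitrary index $m=i+j-2$. You should record it as $\sigma_{m+1}(\mathbf{M})\le\|\mathbf{M}-\mathbf{Z}\|_2$ for every $\mathbf{Z}$ of rank at most $m$. This is trivial when $m\ge\mathrm{rank}(\mathbf{M})$, since then $\sigma_{m+1}(\mathbf{M})=0$, and is your dimension argument otherwise.

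Your objection to the $\argmin$ formulation is right, and it is a defect of the statement, not of your proof.
For every $1\le k<r$ the spectral-norm minimizer is non-unique. Your perturbation $\Ab_k+\epsilon\sigma_{k+1}\ub_1\vb_1^\top$ works for any $0<\epsilon\le 1$, because $\ub_1,\vb_1$ are orthogonal to the singular vectors of $\Ab-\Ab_k$. When $\sigma_k=\sigma_{k+1}$, the matrix $\Ab_k$ is not even well defined, since it depends on the choice of basis inside the repeated singular subspace.

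What is true is that $\Ab_k$ attains the minimum in both norms, with values $\sigma_{k+1}$ and $(\sum_{i>k}\sigma_i^2)^{1/2}$, which is exactly what you prove. This is also all that Corollary~\ref{cor:eckhart_young_err} and, through it, Proposition~\ref{prop:spoofing_lb_informal} actually use.
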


By substituting matrix $\Bb$ with the optimal solution $\Ab_k$, we obtain the following corollary with basic algebra. 

\begin{corollary}[Error of Optimal Low-Rank Approximation]\label{cor:eckhart_young_err}
    Under the same setting as Lemma~\ref{lem:eckhart_young}, the minimum approximation errors achieved by a rank-$k$ matrix are:
    \begin{enumerate}[leftmargin=*, nosep, topsep=2pt]
    \item \textbf{Spectral Norm:}
    \begin{align*}
    \min_{\mathrm{rank}(\mathbf{B})=k} \|\mathbf{A} - \mathbf{B}\|_2 = \sigma_{k+1},
    \end{align*}
    \item \textbf{Frobenius Norm:}
    \begin{align*}
    \min_{\mathrm{rank}(\mathbf{B})=k} \|\mathbf{A} - \mathbf{B}\|_F = \left( \sum_{i=k+1}^r \sigma_i^2 \right)^{1/2}.
    \end{align*}
    \end{enumerate}
\end{corollary}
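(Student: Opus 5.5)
The plan is to combine Lemma~\ref{lem:eckhart_young} with a direct evaluation of the residual at the minimizer. Since the lemma identifies $\Ab_k$ as the minimizer under both the spectral and Frobenius norms, each minimum in the corollary equals $\|\Ab - \Ab_k\|$ in the corresponding norm. So the only work is computing these two norms of the explicit matrix
\begin{align*}
\mathbf{E} := \Ab - \Ab_k = \sum_{i=k+1}^{r} \sigma_i \ub_i \vb_i^\top .
\end{align*}
Here $k<r$ guarantees that $\mathbf{E}\neq \mathbf{0}$, and that $\Ab_k$ has rank exactly $k$ and is therefore feasible.

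The key structural observation is that the expression for $\mathbf{E}$ is itself a compact singular value decomposition. The families $\{\ub_i\}_{i=k+1}^r$ and $\{\vb_i\}_{i=k+1}^r$ remain orthonormal, being subfamilies of the orthonormal singular vectors of $\Ab$. The coefficients $\sigma_{k+1}\ge\dots\ge\sigma_r>0$ remain ordered and positive. Hence the singular values of $\mathbf{E}$ are exactly $\sigma_{k+1},\dots,\sigma_r$.

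For the spectral norm, I would take any unit $\xb$ and expand
\begin{align*}
\|\mathbf{E}\xb\|_2^2 = \sum_{i=k+1}^r \sigma_i^2 (\vb_i^\top \xb)^2 ,
\end{align*}
which uses orthonormality of the $\ub_i$. By Bessel's inequality $\sum_{i}(\vb_i^\top\xb)^2\le 1$, this is at most $\sigma_{k+1}^2$. Equality holds at $\xb=\vb_{k+1}$, so $\|\mathbf{E}\|_2=\sigma_{k+1}$.

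For the Frobenius norm, I would write $\|\mathbf{E}\|_F^2=\tr(\mathbf{E}^\top\mathbf{E})$. Expanding the product and using $\ub_i^\top\ub_j=\delta_{ij}$ and $\vb_i^\top\vb_i=1$, the cross terms vanish and the trace becomes $\sum_{i=k+1}^r\sigma_i^2$. Taking square roots gives the stated formula.

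There is no real obstacle here. The only point needing a remark is that the corollary minimizes over matrices of rank \emph{exactly} $k$. This is harmless because the lemma already asserts $\Ab_k$ is optimal over that class, and $\Ab_k$ has rank exactly $k$ since $\sigma_k>0$.
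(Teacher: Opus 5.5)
Your proposal is correct and follows the paper's argument: the paper also substitutes the Eckart--Young minimizer $\mathbf{A}_k$ from Lemma~\ref{lem:eckhart_young} and evaluates the residual norms ``with basic algebra,'' which you carry out explicitly through the tail expansion $\sum_{i=k+1}^r \sigma_i \mathbf{u}_i\mathbf{v}_i^\top$. Your remark that $\mathbf{A}_k$ has rank exactly $k$, and is therefore feasible under the rank-$=k$ constraint, is a detail the paper leaves implicit.
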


\subsection{Proof of Proposition~\ref{prop:spoofing_lb_informal}}

\begin{proposition}[Lower Bound of Universal Spoofing, Restatement of Proposition~\ref{prop:spoofing_lb_informal}]

Let $\Wb, \Wb' \in \mathbb{R}^{d_1 \times d_2}$ be two weight matrices, where $\Wb$ has rank $k_1$ and $\Wb'$ has rank $k_2$, with $k_1 > k_2$. Let the low-rank adaptation (LoRA) matrices be $\Ab \in \mathbb{R}^{d_1\times r}$ and $\Bb \in \mathbb{R}^{r\times d_2}$, where $r < k_1 - k_2$. Define the adapted weight matrix as $\widehat{\Wb} := \Wb' + \Ab\Bb$ and the residual matrix as $\Rb := \Wb - \Wb'$. For any input vector $\xb \in \mathbb{R}^{d_2}$, the following lower bounds hold:
\begin{enumerate}[leftmargin=*, nosep, topsep=2pt]
    \item \textbf{Spectral Norm:} \begin{equation*}
    \inf_{\Ab, \Bb} \sup_{\|\xb\|_2 = 1} \| \Wb\xb - \widehat{\Wb}\xb \|_2 \ge \sigma_{\min}^+(\Rb), 
    \end{equation*}
    \item \textbf{Frobenius Norm:}
    \begin{equation*}
    \inf_{\Ab, \Bb} \| \Wb - \widehat{\Wb} \|_F^2 \ge (k_1 - k_2 - r) \cdot \sigma_{\min}^+(\Rb)^2,
    \end{equation*}
\end{enumerate}
where $\sigma_{\min}^+(\Rb)$ denotes the smallest non-zero singular value of $\Rb$.
\end{proposition}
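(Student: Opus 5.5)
The plan is to reduce both bounds to a statement about low-rank approximation of the residual matrix $\Rb$, and then apply Corollary~\ref{cor:eckhart_young_err}. First I rewrite the spoofing error as $\Wb - \widehat{\Wb} = \Wb - \Wb' - \Ab\Bb = \Rb - \Ab\Bb$. By definition of the operator norm, $\sup_{\|\xb\|_2=1}\|(\Wb-\widehat{\Wb})\xb\|_2 = \|\Rb - \Ab\Bb\|_2$. As $\Ab,\Bb$ range over $\mathbb{R}^{d_1\times r}\times\mathbb{R}^{r\times d_2}$, the product $\Ab\Bb$ ranges over exactly the matrices of rank at most $r$. So both infima become $\min_{\rank(\Bb')\le r}\|\Rb-\Bb'\|$ in the spectral and Frobenius norms, respectively.

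Next I lower-bound the rank of $\Rb$. Since $\Wb = \Rb + \Wb'$, subadditivity of rank gives $k_1 \le \rank(\Rb) + k_2$, so $s := \rank(\Rb) \ge k_1 - k_2 > r$. In particular $r < s$, so Corollary~\ref{cor:eckhart_young_err} applies to $\Rb$ with $k = r$. Write $\sigma_1\ge\dots\ge\sigma_s>0$ for the nonzero singular values of $\Rb$, so that $\sigma_{\min}^+(\Rb)=\sigma_s$.

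The one technical wrinkle is that the corollary is stated for rank exactly $k$, whereas here I need rank at most $r$. I handle this by applying the corollary for each $j \le r$: a best rank-$j$ approximation has spectral error $\sigma_{j+1} \ge \sigma_{r+1}$ and Frobenius error $\sum_{i>j}\sigma_i^2 \ge \sum_{i>r}\sigma_i^2$. Alternatively, note that any matrix of rank $j<r$ can be perturbed into a rank-$r$ matrix with arbitrarily small change in the error, which is harmless because we only need a bound on an infimum. Either way, the minimum over rank $\le r$ equals the rank-$r$ value.

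Finally I compare with $\sigma_s$. For the spectral norm, the infimum is $\sigma_{r+1}\ge\sigma_s$, because $r+1\le s$. For the Frobenius norm, the squared infimum is $\sum_{i=r+1}^{s}\sigma_i^2 \ge (s-r)\sigma_s^2 \ge (k_1-k_2-r)\,\sigma_{\min}^+(\Rb)^2$, using that every term in the sum is at least $\sigma_s^2$ and that $s \ge k_1-k_2$. No step is a real obstacle; the only points needing care are the rank-$\le r$ versus rank-$r$ issue above and the rank subadditivity step.
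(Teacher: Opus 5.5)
Your proposal is correct and matches the paper's proof essentially step for step. Like the paper, you rewrite the error as $\|\Rb - \Ab\Bb\|$, use rank subadditivity to get $\mathrm{rank}(\Rb) \ge k_1 - k_2 > r$, and apply the Eckart--Young corollary with $k = r$. You then compare $\sigma_{r+1}(\Rb)$ and the tail sum $\sum_{i>r}\sigma_i^2(\Rb)$ against $\sigma_{\min}^+(\Rb)$. Your explicit treatment of rank at most $r$ versus rank exactly $r$ is a small improvement, since the paper glosses over this when it invokes the corollary.
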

\begin{proof}
{\bf Part 1. Spectral Norm}. We first consider the LHS of the spectral norm bound without taking the infimum: 
\begin{align*}
    \sup_{\|\xb\|_2 = 1} \|\Wb\xb - \widehat{\Wb}\xb\|_2 = & ~ \|\Wb - \widehat{\Wb}\|_2\\
    = & ~\|\Rb - \Ab\Bb\|_2,
\end{align*}
where the first equality follows from the definition of spectral norm, and the second equality follows from the definition of $\Rb$. 

Taking the infimum over the LoRA matrices $\Ab$ and $\Bb$ on both sides, we obtain:
\begin{align}\label{eq:lb_part1_eq1}
    \inf_{\Ab, \Bb} \sup_{\|\xb\|_2 = 1} \|\Wb\xb - \widehat{\Wb}\xb\|_2 = \inf_{\Ab, \Bb} \|\Rb - \Ab\Bb\|_2.
\end{align}

Since the LoRA matrices $\Ab \in \mathbb{R}^{d_1\times r}$ and $\Bb \in \mathbb{R}^{r\times d_2}$ are bottlenecked by their inner dimension $r$, their product $\Ab\Bb$ has a rank of at most $r$. Therefore, minimizing $\|\Rb - \Ab\Bb\|_2$ is equivalent to finding the optimal rank-$r$ approximation of the residual matrix $\Rb$ under the spectral norm. We can directly apply part 1 of Corollary~\ref{cor:eckhart_young_err} to evaluate this optimal error, which yields:
\begin{align}\label{eq:lb_part1_eq2}
    \inf_{\Ab, \Bb} \|\Rb - \Ab\Bb\|_2 = \sigma_{r+1}(\Rb).
\end{align}

We now lower bound $\sigma_{r+1}(\Rb)$. Because $\Wb = \Wb' + \Rb$, where $\mathrm{rank}(\Wb) = k_1$ and $\mathrm{rank}(\Wb') = k_2$, the subadditivity property of matrix rank dictates that $\mathrm{rank}(\Rb) \ge k_1 - k_2$. The proposition assumes the adaptation rank satisfies $r < k_1 - k_2$, which directly implies $r + 1 \le \mathrm{rank}(\Rb)$. Consequently, the $(r+1)$-th singular value must be at least as large as the smallest non-zero singular value of $\Rb$:
\begin{align} \label{eq:lb_part1_eq3}
    \sigma_{r+1}(\Rb) \ge \sigma_{\min}^+(\Rb).
\end{align}

Combining the Eq.~\eqref{eq:lb_part1_eq1}, Eq.~\eqref{eq:lb_part1_eq2} and Eq.~\eqref{eq:lb_part1_eq3} above, we obtain the desired result:
\begin{align*}
    \inf_{\Ab, \Bb} \sup_{\|\xb\|_2 = 1} \| \Wb\xb - \widehat{\Wb}\xb \|_2 \ge \sigma_{\min}^+(\Rb),
\end{align*}
which finishes the proof for the spectral norm case.

{\bf Part 2. Frobenius Norm}. We consider the LHS of the Frobenius norm bound and have the following:  
\begin{align*}
\inf_{\Ab, \Bb} \| \Wb - \widehat{\Wb} \|_F^2 = & ~ \inf_{\Ab,\Bb} \|\Rb - \Ab\Bb\|_F^2\\
= & ~ \min_{\rank(\Xb) \le r} \|\Rb - \Xb\|_F^2 \\
= & ~ \sum_{i=r+1}^{\rank(\Rb)} \sigma_i^2(\Rb) \\
\ge & ~ (\rank(\Rb) - r) \sigma_{\min}^+(\Rb)^2 \\
\ge & ~ (k_1 - k_2 - r) \sigma_{\min}^+(\Rb)^2, 
\end{align*}
where the first step is from the definition of $\Rb$, the second step is from the fact that $\Ab\Bb$ is of rank $r$, the third step is from part 2 of Corollary~\ref{cor:eckhart_young_err}, the fourth step is from the fact that all terms in the summation are at least $\sigma_{\min}^+(\Rb)^2$, and the last step is from the fact on $\rank(\Rb) \ge k_1 - k_2$. 

Thus, we complete the proof.
\end{proof}

\subsection{Proof of Theorem~\ref{thm:simple_dist_spoofing}}

To prove this upper bound, we first present a useful fact. 
\begin{fact}[Folklore] \label{fac:E_AxxA}
For a zero-mean stochastic vector $\mathbf{x}$ with covariance matrix $\mathbf{M} := \mathbb{E}[\mathbf{x}\mathbf{x}^\top]$ and any constant matrix $\mathbf{A}$, we have:
\begin{align*}
\mathbb{E}[(\mathbf{A}\mathbf{x})(\mathbf{A}\mathbf{x})^\top] = \mathbf{A}\mathbf{M}\mathbf{A}^\top.
\end{align*}
\end{fact}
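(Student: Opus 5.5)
The plan is a direct computation that relies only on linearity of expectation, since $\mathbf{A}$ is deterministic.

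First, rewrite the integrand: for every realization of $\mathbf{x}$, the transpose rule gives $(\mathbf{A}\mathbf{x})^\top = \mathbf{x}^\top\mathbf{A}^\top$. Hence
\begin{align*}
(\mathbf{A}\mathbf{x})(\mathbf{A}\mathbf{x})^\top = \mathbf{A}\,(\mathbf{x}\mathbf{x}^\top)\,\mathbf{A}^\top .
\end{align*}

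Second, take expectations and pull the constant matrices out on both sides. The cleanest justification works entrywise. Writing $\mathbf{A} = (a_{ij})$, the $(i,l)$ entry of $\mathbf{A}\mathbf{x}\mathbf{x}^\top\mathbf{A}^\top$ is the finite sum $\sum_{j,k} a_{ij}\, x_j x_k\, a_{lk}$. Its expectation is therefore $\sum_{j,k} a_{ij}\,\mathbb{E}[x_j x_k]\, a_{lk}$, which is exactly the $(i,l)$ entry of $\mathbf{A}\mathbf{M}\mathbf{A}^\top$ because $\mathbf{M}_{jk} = \mathbb{E}[x_j x_k]$. This identifies the two matrices entry by entry and proves the fact.

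No step here is a real obstacle. The only point to state explicitly is that $\mathbf{M} = \mathbb{E}[\mathbf{x}\mathbf{x}^\top]$ is assumed to exist, meaning $\mathbf{x}$ has finite second moments, so each $\mathbb{E}[x_j x_k]$ is finite and linearity applies to the finite sums above. The zero-mean hypothesis is not used in the computation itself. It only makes $\mathbf{M}$ coincide with the covariance, which is how the fact will be invoked in Theorem~\ref{thm:simple_dist_spoofing}.

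In that theorem, the fact will be applied with $\mathbf{A} = \Rb - \Ab\Bb$ and $\mathbf{M} = \mathbf{\Sigma}$, taking the trace of both sides to write $\mathbb{E}\|(\Rb-\Ab\Bb)\xb\|_2^2$ as $\tr\big((\Rb-\Ab\Bb)\mathbf{\Sigma}(\Rb-\Ab\Bb)^\top\big)$.
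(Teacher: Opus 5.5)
Your proof is correct. The paper gives no proof of this fact: it labels the identity folklore and invokes it directly in the proof of Theorem~\ref{thm:simple_dist_spoofing}, with the constant matrix taken to be $\Rb\Qb = \Rb - \Rb\Ub_r\Ub_r^\top$. Your entrywise linearity-of-expectation computation is the standard justification the paper relies on, and your side remarks are accurate: only finite second moments are needed, and the zero-mean hypothesis merely lets $\mathbf{M}$ be called the covariance.
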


Now, we show the upper bound under the simple query distribution setting. 
\begin{theorem}[Upper Bound under Low-Effective-Rank Distributions, Restatement of Theorem~\ref{thm:simple_dist_spoofing}]
Let $\xb \in \mathbb{R}^{d_2}$ be the representation of a query drawn from a zero-mean continuous distribution $\Dcal_q$, and let $\mathbf{\Sigma} := \mathbb{E}_{\xb \sim \Dcal_q}[\xb\xb^\top]$ be the covariance matrix. Let $\lambda_1 \ge \lambda_2 \ge \dots \ge \lambda_{d_2} \ge 0$ denote the eigenvalues of $\mathbf{\Sigma}$. 

Let the LoRA matrices be $\Ab \in \mathbb{R}^{d_1\times r}$ and $\Bb \in \mathbb{R}^{r\times d_2}$, and the adapted weight matrix be $\widehat{\Wb} := \Wb' + \Ab\Bb$. The optimal expected spoofing error is bounded by the following: 
\begin{align*}
    \begin{aligned}
    \inf_{\Ab, \Bb} \mathbb{E}_{\xb \sim \Dcal_q} \big[ \| \Wb\xb - \widehat{\Wb}\xb \|_2^2 \big] \le \|\Rb\|_2^2 \sum_{i=r+1}^{d_2} \lambda_i, 
    \end{aligned}
\end{align*}
where $\Rb := \Wb - \Wb'$ is the residual matrix. 
\end{theorem}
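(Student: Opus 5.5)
Since the statement is an upper bound on an infimum, we only need to exhibit one admissible pair $(\Ab,\Bb)$ and evaluate its expected error. We will pick the LoRA update so that it cancels the residual $\Rb$ exactly on the top-$r$ eigenspace of the query covariance $\mathbf{\Sigma}$. The leftover error then lives only in the low-variance directions.

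Let $\mathbf{\Sigma} = \Ub\mathbf{\Lambda}\Ub^\top$ be an eigendecomposition with orthonormal eigenvectors $\ub_1,\dots,\ub_{d_2}$ ordered by $\lambda_1\ge\dots\ge\lambda_{d_2}$. Let $\Pb_r := \sum_{i=1}^r \ub_i\ub_i^\top$ be the orthogonal projector onto the top-$r$ eigenspace. We choose $\Ab := \Rb\,[\ub_1,\dots,\ub_r] \in \mathbb{R}^{d_1\times r}$ and $\Bb := [\ub_1,\dots,\ub_r]^\top \in \mathbb{R}^{r\times d_2}$, so that $\Ab\Bb = \Rb\Pb_r$. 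Then
\begin{align*}
\Wb\xb - \widehat{\Wb}\xb = (\Rb - \Ab\Bb)\xb = \Rb(\Ib - \Pb_r)\xb .
\end{align*}

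Next we compute the expectation. We write $\|\Rb(\Ib-\Pb_r)\xb\|_2^2 = \tr\big((\Rb(\Ib-\Pb_r)\xb)(\Rb(\Ib-\Pb_r)\xb)^\top\big)$, exchange trace and expectation by linearity, and apply Fact~\ref{fac:E_AxxA} with the constant matrix $\Rb(\Ib-\Pb_r)$. This gives
\begin{align*}
\mathbb{E}\big[\|\Rb(\Ib-\Pb_r)\xb\|_2^2\big] = \tr\big(\Rb(\Ib-\Pb_r)\mathbf{\Sigma}(\Ib-\Pb_r)\Rb^\top\big).
\end{align*}
Since $\Pb_r$ commutes with $\mathbf{\Sigma}$, we have $(\Ib-\Pb_r)\mathbf{\Sigma}(\Ib-\Pb_r) = \sum_{i=r+1}^{d_2}\lambda_i \ub_i\ub_i^\top$. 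The trace therefore equals $\sum_{i=r+1}^{d_2}\lambda_i\|\Rb\ub_i\|_2^2$.

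Finally, $\|\Rb\ub_i\|_2 \le \|\Rb\|_2$ because each $\ub_i$ is a unit vector, and every $\lambda_i\ge0$. Hence the sum is at most $\|\Rb\|_2^2\sum_{i=r+1}^{d_2}\lambda_i$. Taking the infimum over $(\Ab,\Bb)$ can only decrease the left-hand side, which gives the claim.

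There is no real obstacle; the only conceptual step is choosing the projector onto the eigenvectors of $\mathbf{\Sigma}$ rather than the singular vectors of $\Rb$, which is what produces the tail sum of eigenvalues. Two routine points need a word. If $r\ge d_2$, the bound is trivially $0$ and is achieved by $\Pb_r=\Ib$. The trace manipulation only needs second moments to exist, which the definition of $\mathbf{\Sigma}$ assumes.
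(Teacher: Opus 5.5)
Your proposal is correct and matches the paper's proof step for step. You use the same construction $\Ab\Bb=\Rb\Ub_r\Ub_r^\top$ onto the top-$r$ eigenvectors of $\mathbf{\Sigma}$, the trace identity from Fact~\ref{fac:E_AxxA}, the collapse to $\sum_{i=r+1}^{d_2}\lambda_i\|\Rb\ub_i\|_2^2$, and the bound $\|\Rb\ub_i\|_2\le\|\Rb\|_2$; your explicit factorization $\Ab=\Rb[\ub_1,\dots,\ub_r]$, $\Bb=[\ub_1,\dots,\ub_r]^\top$ and your remark on the case $r\ge d_2$ are small clarifications the paper leaves implicit.
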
 
\begin{proof}
Let $\mathbf{\Sigma} = \mathbb{E}_{\xb\sim\Dcal_q}[\xb\xb^\top]$ have eigendecomposition $\mathbf{\Sigma} = \Ub \Lb \Ub^\top$ with $\Lb = \diag(\lambda_1,\dots,\lambda_{d_2})$ and $\lambda_1 \ge \lambda_2 \ge \dots \ge 0$.  Choose $\Ub_r \in \R^{d_2 \times r}$ as the first $r$ eigenvectors.  Construct
\begin{align*}
\Ab\Bb = \Rb \Ub_r \Ub_r^\top,
\end{align*}
which has rank at most $r$ and thus can be factored as $\Ab \Bb$.  

For this choice, we have the following:
\begin{align*}
    & ~ \mathbb{E}_{\xb \sim \Dcal_q} \big[ \| \Wb\xb - \widehat{\Wb}\xb \|_2^2 \big] \\
    = & ~ \mathbb{E}_{\xb \sim \Dcal_q} \big[ \| (\Rb - \Ab\Bb) \xb \|_2^2 \big] \\ 
    = & ~ \mathbb{E}_{\xb \sim \Dcal_q} \big[ \| \Rb (\Ib - \Ub_r \Ub_r^\top) \xb \|_2^2 \big] \\ 
    = & ~ \mathbb{E}_{\xb \sim \Dcal_q} \big[ \| \Rb (\Ub\Ub^\top - \Ub_r \Ub_r^\top) \xb \|_2^2 \big] \\ 
    = & ~ \mathbb{E}_{\xb \sim \Dcal_q} \left[ \left\| \Rb \left(\sum_{j=r+1}^{d_2}\ub_j \ub_j^\top\right) \xb  \right\|_2^2 \right],
\end{align*}
where the first step is from the definition of $\widehat{\Wb}$ and $\Rb$, the second step is from choosing a specific $\Ab\Bb$, the third step is from the orthogonality of eigenvectors, and the last step is from expanding the matrix into eigenvector outer products and eliminating the first $r$ terms. 

For simplicity, let $\Qb := \sum_{j=r+1}^{d_2}\ub_j \ub_j^\top$. Applying the trace identity for expected quadratic forms in Fact~\ref{fac:E_AxxA}, we have:
\begin{align*}
    \mathbb{E}_{\xb \sim \Dcal_q} \big[ \| \Wb\xb - \widehat{\Wb}\xb \|_2^2 \big]  = & ~ \mathbb{E}_{\xb \sim \Dcal_q} \big[ \| \Rb\Qb\xb \|_2^2 \big]\\
    = & ~ \tr(\Rb\Qb\mathbf{\Sigma}\Qb^\top\Rb^\top).
\end{align*}

Since eigenvectors form an orthonormal basis, we expand and collapse the inner product $\Qb\mathbf{\Sigma}\Qb^\top$ as follows:
\begin{align*}\small
    \begin{aligned}
    & ~ \Qb\mathbf{\Sigma}\Qb^\top \\
    = & ~ \left(\sum_{j=r+1}^{d_2} \ub_j \ub_j^\top\right) \left(\sum_{i=1}^{d_2} \lambda_i \ub_i \ub_i^\top\right) \left(\sum_{k=r+1}^{d_2} \ub_k \ub_k^\top\right) \\
    = & ~ \sum_{i=r+1}^{d_2} \lambda_i \ub_i \ub_i^\top,
    \end{aligned}
\end{align*}
where the second step follows from the orthogonality condition $\ub_j^\top \ub_i = 1$ if $i=j$ and $0$ otherwise.

Finally, we substitute this collapsed term back into the trace to bound the expected error:
\begin{align*}
    & ~ \mathbb{E}_{\xb \sim \Dcal_q} \big[ \| \Wb\xb - \widehat{\Wb}\xb \|_2^2 \big] \\
     = & ~ \tr(\Rb\Qb\mathbf{\Sigma}\Qb^\top\Rb^\top)\\
     = & ~ \tr\left(\Rb \left(\sum_{i=r+1}^{d_2} \lambda_i \ub_i \ub_i^\top\right) \Rb^\top\right) \\
    = & ~ \sum_{i=r+1}^{d_2} \lambda_i \tr(\Rb \ub_i \ub_i^\top \Rb^\top) \\
    = & ~ \sum_{i=r+1}^{d_2} \lambda_i \tr(\ub_i^\top \Rb^\top \Rb \ub_i) \\
    = & ~ \sum_{i=r+1}^{d_2} \lambda_i \|\Rb\ub_i\|_2^2 \\
    \le & ~ \sum_{i=r+1}^{d_2} \lambda_i \|\Rb\|_2^2 \|\ub_i\|_2^2 \\
    = & ~ \|\Rb\|_2^2 \sum_{i=r+1}^{d_2} \lambda_i,
\end{align*}
where the third step follows from the linearity of the trace operator, the fourth step follows from the cyclic property of the trace, the fifth step follows from the definition of the $\ell_2$ norm, the fifth step follows from the fact that $\|\Rb\ub_i\|_2 \le \|\Rb\|_2\|\ub_i\|_2$, and the final step follows from the fact that eigenvectors are unit vectors (i.e., $\|\ub_i\|_2 = 1$).

Taking the infimum over all parameterizations of $\Ab$ and $\Bb$ can only yield an expected error less than or equal to this specific construction. Therefore, the upper bound holds, completing the proof.
\end{proof}

\subsection{Proof of Theorem~\ref{thm:classifier_spoofing}}

To show the tight bound under the weak classifier setting, we first prove a variant of the Eckhart-Young Theorem. 

\begin{lemma}[A Variant of the Eckhart-Young Theorem]\label{lem:eckhart_young_variant}
Let $\mathbf{A} \in \mathbb{R}^{d_1 \times d_2}$ and $\Cb \in \mathbb{R}^{d_1\times d_1}$. Let the singular value decomposition of the product matrix $\Cb\mathbf{A}$ be given by $\Cb\mathbf{A} = \sum_{i=1}^p \sigma_i \mathbf{u}_i \mathbf{v}_i^\top$, where $p = \min(d_1, d_2)$ and $\sigma_1 \ge \sigma_2 \ge \dots \ge 0$. 

For any integer $k < p$, let $(\Cb\mathbf{A})_k:= \sum_{i=1}^k \sigma_i \mathbf{u}_i \mathbf{v}_i^\top$ denote the rank-$k$ truncated singular value decomposition of $\Cb\mathbf{A}$. 

Then, the minimum achievable approximation error under the spectral norm is:
\begin{align*}
    \min_{\mathrm{rank}(\mathbf{B}) \le k} \|\Cb\mathbf{A} - \Cb\mathbf{B}\|_2  = \sigma_{k+1}(\Cb\mathbf{A}).
\end{align*}
\end{lemma}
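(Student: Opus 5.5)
The plan is to prove the two inequalities separately: a lower bound that comes from the classical Eckart–Young result applied to the product $\Cb\mathbf{A}$, and an upper bound from an explicit choice of $\mathbf{B}$ whose image under $\Cb$ is exactly the truncated SVD $(\Cb\mathbf{A})_k$.

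\emph{Lower bound.} Fix any $\mathbf{B}$ with $\mathrm{rank}(\mathbf{B}) \le k$. Then $\mathrm{rank}(\Cb\mathbf{B}) \le \mathrm{rank}(\mathbf{B}) \le k$. So $\Cb\mathbf{B}$ is one candidate in the unconstrained problem of approximating $\Cb\mathbf{A}$ by a matrix of rank at most $k$. Suppose $k < \mathrm{rank}(\Cb\mathbf{A})$. Lemma~\ref{lem:eckhart_young} and part 1 of Corollary~\ref{cor:eckhart_young_err}, applied to $\Cb\mathbf{A}$, give
\[
\|\Cb\mathbf{A} - \Cb\mathbf{B}\|_2 \ge \sigma_{k+1}(\Cb\mathbf{A}).
\]
This is stated for rank exactly $k$, but it extends to rank at most $k$: a matrix of rank $k' \le k$ has error at least $\sigma_{k'+1} \ge \sigma_{k+1}$. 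Now suppose $k \ge \mathrm{rank}(\Cb\mathbf{A})$. Then $\sigma_{k+1}(\Cb\mathbf{A}) = 0$ and the lower bound holds trivially.

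\emph{Upper bound (construction).} Let $\mathbf{P}_k := \sum_{i=1}^k \mathbf{v}_i\mathbf{v}_i^\top$ be the orthogonal projector onto the top $k$ right singular vectors of $\Cb\mathbf{A}$. Set $\mathbf{B} := \mathbf{A}\mathbf{P}_k$, which has rank at most $k$. By orthonormality of the $\mathbf{v}_i$,
\[
\Cb\mathbf{B} = (\Cb\mathbf{A})\mathbf{P}_k = \sum_{i=1}^p \sigma_i \mathbf{u}_i \mathbf{v}_i^\top \mathbf{P}_k = (\Cb\mathbf{A})_k .
\]
Hence
\[
\|\Cb\mathbf{A} - \Cb\mathbf{B}\|_2 = \Bigl\|\sum_{i=k+1}^p \sigma_i \mathbf{u}_i\mathbf{v}_i^\top\Bigr\|_2 = \sigma_{k+1}(\Cb\mathbf{A}).
\]
The minimum is therefore attained and equals $\sigma_{k+1}(\Cb\mathbf{A})$.

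\emph{Main obstacle.} The only real obstacle is the upper bound. $\Cb$ may be singular, so one cannot simply write $\mathbf{B} = \Cb^{-1}(\Cb\mathbf{A})_k$. A less careful route would try to preimage the left singular vectors $\mathbf{u}_i$ through $\Cb$ (possible, since $\mathbf{u}_i \in \mathrm{col}(\Cb\mathbf{A}) \subseteq \mathrm{col}(\Cb)$). Right-multiplying $\mathbf{A}$ by the projector $\mathbf{P}_k$ avoids inverting $\Cb$ altogether, and this is the step I would rely on. What remains is bookkeeping of the degenerate case where $\sigma_{k+1} = 0$.
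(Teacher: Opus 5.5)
Your proof is correct. Its skeleton is the paper's: realize $(\Cb\mathbf{A})_k$ as $\Cb\mathbf{B}$ with $\mathrm{rank}(\mathbf{B})\le k$, and get the lower bound from $\mathrm{rank}(\Cb\mathbf{B})\le k$. However, you implement both halves differently.

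For existence, the paper sets $\mathbf{B}^*=\Cb^{+}(\Cb\mathbf{A})_k$. It uses $\mathrm{col}((\Cb\mathbf{A})_k)\subseteq\mathrm{col}(\Cb)$, so that $\Cb\Cb^{+}$ (the orthogonal projector onto $\mathrm{col}(\Cb)$) fixes $(\Cb\mathbf{A})_k$. This is exactly the ``preimage through $\Cb$'' route you set aside, and it is fully rigorous for singular $\Cb$. It just costs a pseudoinverse and a column-space argument, whereas your $\mathbf{B}=\mathbf{A}\mathbf{P}_k$ needs only orthonormality of the $\mathbf{v}_i$. Each choice has a small bonus. The paper's $\mathbf{B}^*$ is the minimum-Frobenius-norm solution of $\Cb\mathbf{B}=(\Cb\mathbf{A})_k$. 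Yours, specialized to Theorem~\ref{thm:classifier_spoofing}, gives the explicit adapter $\Rb\mathbf{P}_r$. That adapter copies the residual on the $r$ input directions along which $\Cb^\top\Rb$ is largest, mirroring the $\Rb\Ub_r\Ub_r^\top$ construction in the proof of Theorem~\ref{thm:simple_dist_spoofing}.

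For optimality, the paper does not cite Corollary~\ref{cor:eckhart_young_err}. Instead it reproves the spectral Eckart--Young bound inline: it intersects $\mathrm{null}(\Cb\mathbf{B})$ (dimension at least $d_2-k$) with $\mathrm{span}\{\mathbf{v}_1,\dots,\mathbf{v}_{k+1}\}$ and evaluates $\Cb\mathbf{A}-\Cb\mathbf{B}$ on a unit vector in the intersection. That argument is self-contained and treats every $k<p$ uniformly, zero singular values included. Your reduction is shorter but must adapt the corollary's ``rank exactly $k$, $k<\mathrm{rank}$'' form, using the monotonicity $\sigma_{k'+1}\ge\sigma_{k+1}$ and a separate degenerate case; you handle both correctly.
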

\begin{proof}

{\bf Part 1. Existence}. We show that there exists a rank-$k$ matrix $\mathbf{B}$ such that $\|\Cb\mathbf{A} - \Cb\mathbf{B}\|_2 = \sigma_{k+1}(\Cb\mathbf{A})$. 

Because the product matrix $\Cb\mathbf{A}$ is formed by linear combinations of the columns of $\Cb$, its column space is restricted: $\mathrm{col}(\Cb\mathbf{A}) \subseteq \mathrm{col}(\Cb)$. Furthermore, its truncated SVD is built from its principal directions, meaning $\mathrm{col}((\Cb\mathbf{A})_k) \subseteq \mathrm{col}(\Cb\mathbf{A})$. 

Since $\mathrm{col}((\Cb\mathbf{A})_k) \subseteq \mathrm{col}(\Cb)$, the linear equation $\Cb\mathbf{X} = (\Cb\mathbf{A})_k$ is consistent. We can construct a specific solution using the Moore-Penrose pseudoinverse: let $\mathbf{B}^* = \Cb^+ (\Cb\mathbf{A})_k$. This satisfies $\Cb\mathbf{B}^* = (\Cb\mathbf{A})_k$. Furthermore, because the rank of a product cannot exceed the rank of its factors, $\mathrm{rank}(\mathbf{B}^*) \le \mathrm{rank}((\Cb\mathbf{A})_k) \le k$.

We now consider this specific valid rank-$k$ matrix $\mathbf{B}^*$ and compute the approximation error:
\begin{align*}
\|\Cb\mathbf{A} - \Cb\mathbf{B}^*\|_2 = & ~ \|\Cb\mathbf{A} - (\Cb\mathbf{A})_k\|_2 \\
= & ~ \left\| \sum_{i=1}^p \sigma_i \mathbf{u}_i \mathbf{v}_i^\top - \sum_{i=1}^k \sigma_i \mathbf{u}_i \mathbf{v}_i^\top \right\|_2 \\
= & ~ \left\| \sum_{i=k+1}^p \sigma_i \mathbf{u}_i \mathbf{v}_i^\top \right\|_2,
\end{align*}
which follows from basic algebra. 

Because the remaining singular values are ordered $\sigma_{k+1} \ge \sigma_{k+2} \ge \dots$, the spectral norm of this residual matrix is exactly the largest remaining singular value. Thus, $\|\Cb\mathbf{A} - \Cb\mathbf{B}^*\|_2 = \sigma_{k+1}(\Cb\mathbf{A})$.

{\bf Part 2. Optimality}. We show that no matrix $\mathbf{B}$ with rank at most $k$ can achieve a lower error value than $\sigma_{k+1}(\Cb\mathbf{A})$. 

Suppose $\mathrm{rank}(\mathbf{B}) \le k$ for some arbitrary matrix $\mathbf{B} \in \mathbb{R}^{d_1 \times d_2}$. Let $\mathbf{Y} = \Cb\mathbf{B}$. By the properties of matrix multiplication, $\mathrm{rank}(\mathbf{Y}) \le \mathrm{rank}(\mathbf{B}) \le k$. 

Because $\mathbf{Y}$ has $d_2$ columns and rank at most $k$, its null space has a dimension of at least $d_2 - k$, which follows from the rank-nullity theorem. It follows that we can find orthonormal vectors $\mathbf{x}_1, \dots, \mathbf{x}_{d_2-k}$ such that $\mathrm{null}(\mathbf{Y}) = \mathrm{span}\{\mathbf{x}_1, \dots, \mathbf{x}_{d_2-k}\}$ via the Gram-Schmidt process. 

Let $\mathcal{V}_{k+1} := \mathrm{span}\{\mathbf{v}_1, \dots, \mathbf{v}_{k+1}\}$ be the $(k+1)$-dimensional subspace spanned by the top right singular vectors of $\Cb\mathbf{A}$. A dimension argument shows that:
\begin{align*}
(d_2 - k) + (k + 1) = & ~ d_2 + 1 > d_2,
\end{align*}
meaning the sum of their dimensions exceeds the ambient space $\mathbb{R}^{d_2}$. Therefore, their intersection must be non-trivial:
\begin{align*}
\mathrm{span}\{\mathbf{x}_1, \dots, \mathbf{x}_{d_2-k}\} \cap \mathrm{span}\{\mathbf{v}_1, \dots, \mathbf{v}_{k+1}\} \neq \{\mathbf{0}\}.
\end{align*}

Let $\mathbf{z}$ be a unit 2-norm vector (i.e., $\|\mathbf{z}\|_2 = 1$) in this intersection. Since $\mathbf{z} \in \mathrm{null}(\mathbf{Y})$, we have $\mathbf{Y}\mathbf{z} = \Cb\mathbf{B}\mathbf{z} = \mathbf{0}$. Furthermore, since $\mathbf{z} \in \mathcal{V}_{k+1}$, applying $\Cb\mathbf{A}$ to $\mathbf{z}$ yields:
\begin{align}\label{eq:cbz}
(\Cb\mathbf{A})\mathbf{z} = & ~ \sum_{i=1}^{k+1} \sigma_i (\mathbf{v}_i^\top \mathbf{z}) \mathbf{u}_i.
\end{align}

We can now lower-bound the spectral norm for any choice of $\mathbf{B}$:
\begin{align*}
\|\Cb\mathbf{A} - \Cb\mathbf{B}\|_2^2 \ge & ~ \|(\Cb\mathbf{A} - \Cb\mathbf{B})\mathbf{z}\|_2^2 \\
= & ~ \|(\Cb\mathbf{A})\mathbf{z} - \mathbf{0}\|_2^2 \\
= & ~ \left\| \sum_{i=1}^{k+1} \sigma_i (\mathbf{v}_i^\top \mathbf{z}) \mathbf{u}_i \right\|_2^2 \\
= & ~ \sum_{i=1}^{k+1} \sigma_i^2 (\mathbf{v}_i^\top \mathbf{z})^2 \\
\ge & ~ \sigma_{k+1}^2 \sum_{i=1}^{k+1} (\mathbf{v}_i^\top \mathbf{z})^2 \\
= & ~ \sigma_{k+1}^2 \|\mathbf{z}\|_2^2 \\
= & ~ \sigma_{k+1}^2,
\end{align*}
where the sixth step follows from the fact that $\zb \in \mathrm{span}\{\vb_1, \dots, \vb_{k+1}\}$, and other steps follows from definitions and basic algebra. 

Taking the square root of both sides yields $\|\Cb\mathbf{A} - \Cb\mathbf{B}\|_2 \ge \sigma_{k+1}(\Cb\mathbf{A})$, completing the proof.

\end{proof}

Based on the lemma above, we show the tight bound as follows. 

\begin{theorem}[Tight Bound under Weak Classifiers, Restatement of Theorem~\ref{thm:classifier_spoofing}]
Let $\Cb \in \mathbb{R}^{d_1 \times d_1}$ be a weight matrix from the auditor's verification model, and let $\Wb, \Wb' \in \mathbb{R}^{d_1 \times d_2}$ be the weight matrices of the strong target model and the weak surrogate model, respectively. 

Let the LoRA matrices be $\Ab \in \mathbb{R}^{d_1\times r}$ and $\Bb \in \mathbb{R}^{r\times d_2}$, and the adapted weight matrix be $\wh{\Wb} := \Wb' + \Ab\Bb$. Under the audit, the worst-case spoofing error is exactly: 
\begin{align*}
    \begin{aligned}
    \inf_{\Ab, \Bb} \sup_{\|\xb\|_2 = 1} \| \Cb^\top (\Wb\xb - \wh{\Wb}\xb) \|_2 = \sigma_{r+1}(\Cb^\top \Rb),
    \end{aligned}
\end{align*}
where $\Rb = \Wb - \Wb'$ is the residual matrix, and $\sigma_{r+1}(\cdot)$ is the $(r+1)$-th largest singular value. 
\end{theorem}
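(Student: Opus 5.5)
The plan is to reduce the worst-case spoofing error to a matrix approximation problem and then apply Lemma~\ref{lem:eckhart_young_variant} with $\Cb$ replaced by $\Cb^\top$. That lemma is stated for an arbitrary square $d_1\times d_1$ matrix, so it applies verbatim to $\Cb^\top$.

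First, for fixed $\Ab,\Bb$, the definitions $\wh{\Wb}=\Wb'+\Ab\Bb$ and $\Rb=\Wb-\Wb'$ give
\begin{align*}
\sup_{\|\xb\|_2=1}\|\Cb^\top(\Wb\xb-\wh{\Wb}\xb)\|_2 = \|\Cb^\top\Rb-\Cb^\top\Ab\Bb\|_2 .
\end{align*}
This holds by the definition of the induced spectral norm.

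Second, I identify the feasible set $\{\Ab\Bb:\Ab\in\R^{d_1\times r},\Bb\in\R^{r\times d_2}\}$ with $\{\mathbf{B}\in\R^{d_1\times d_2}:\rank(\mathbf{B})\le r\}$. One inclusion is clear, since the inner dimension is $r$. For the other, any matrix of rank at most $r$ factors through its compact SVD, padded with zero columns and rows if its rank is below $r$. Hence
\begin{align*}
\inf_{\Ab,\Bb}\|\Cb^\top\Rb-\Cb^\top\Ab\Bb\|_2=\min_{\rank(\mathbf{B})\le r}\|\Cb^\top\Rb-\Cb^\top\mathbf{B}\|_2 .
\end{align*}

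Third, I invoke Lemma~\ref{lem:eckhart_young_variant} with $\mathbf{A}=\Rb$, $k=r$, and $\Cb^\top$ in place of $\Cb$. This shows the minimum is exactly $\sigma_{r+1}(\Cb^\top\Rb)$, and that the infimum is attained by $\mathbf{B}^*=(\Cb^\top)^+(\Cb^\top\Rb)_r$.

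The only delicate point is the range of $r$. The lemma assumes $k<p=\min(d_1,d_2)$. If $r\ge p$, I will handle the case separately: $\Ab\Bb=\Rb$ is then feasible, so the error is $0$, which agrees with the convention $\sigma_{r+1}=0$. Everything else is bookkeeping. The substantive content, namely the existence of $\mathbf{B}^*$ through the column-space inclusion $\mathrm{col}(\Cb^\top\Rb)\subseteq\mathrm{col}(\Cb^\top)$ and the optimality step through the dimension-counting argument, is already carried out in the lemma.
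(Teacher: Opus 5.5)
Your proposal is correct and follows the paper's proof essentially step for step. You rewrite the error as the induced spectral norm $\|\Cb^\top\Rb-\Cb^\top\Ab\Bb\|_2$, identify the set of products $\Ab\Bb$ with the matrices of rank at most $r$, and invoke Lemma~\ref{lem:eckhart_young_variant} with $\Cb^\top$ in place of $\Cb$.

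You also treat the case $r \ge \min(d_1,d_2)$ separately. There the lemma's hypothesis $k<p$ fails and the error is trivially $0$. The paper's proof leaves this case implicit, so your handling closes that small gap.
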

\begin{proof}
By the definition of $\wh{\Wb}$ and $\Rb$, and the definition of the matrix spectral norm, we first rewrite the worst-case spoofing error:
\begin{align} \label{eq:tight_eq1}
& ~ \inf_{\Ab, \Bb} \sup_{\|\xb\|_2 = 1} \| \Cb^\top (\Wb\xb - \wh{\Wb}\xb) \|_2 \notag \\
= & ~ \inf_{\Ab, \Bb} \|\Cb^\top\Rb - \Cb^\top\Ab\Bb\|_2.
\end{align}

Because the LoRA matrices $\Ab \in \mathbb{R}^{d_1\times r}$ and $\Bb \in \mathbb{R}^{r\times d_2}$ are bottlenecked by their inner dimension $r$, their product $\Ab\Bb$ forms a matrix of rank at most $r$. Conversely, any $d_1 \times d_2$ matrix with a rank of at most $r$ can be exactly factored into the product of two such matrices $\Ab$ and $\Bb$. Therefore, taking the infimum over the parameter space of $\Ab$ and $\Bb$ is mathematically equivalent to minimizing over all possible matrices $\Xb$ of rank at most $r$:
\begin{align}\label{eq:tight_eq2}
& ~ \inf_{\Ab, \Bb} \|\Cb^\top\Rb - \Cb^\top\Ab\Bb\|_2 \notag \\
= & ~ \min_{\mathrm{rank}(\Xb) \le r} \|\Cb^\top\Rb - \Cb^\top\Xb\|_2.
\end{align}

We now apply Lemma~\ref{lem:eckhart_young_variant}, substituting $\Cb$ with $\Cb^\top$, $\mathbf{A}$ with $\Rb$, and the rank constraint $k$ with $r$. This yields exactly:
\begin{align} \label{eq:tight_eq3}
\min_{\mathrm{rank}(\Xb) \le r} \|\Cb^\top\Rb - \Cb^\top\Xb\|_2 = & ~ \sigma_{r+1}(\Cb^\top\Rb).
\end{align}

Combining Eq.~\eqref{eq:tight_eq1}, Eq.~\eqref{eq:tight_eq2}, and Eq.~\eqref{eq:tight_eq3} completes the proof.
\end{proof}

\end{document}